\keywords{Double-pushout graph transformation  \and Isabelle/HOL \and Uniqueness of derivations \and Church-Rosser theorem}
\tikzstyle{new style 0}=[fill=white, draw=black, shape=rectangle]
\tikzstyle{right arrow}=[->, shorten >=10pt, shorten <=10pt]
\tikzstyle{new edge style 0}=[-, shorten <=10pt]
\tikzstyle{new edge style 1}=[->, shorten >=10pt]
\newcommand{\DefineSnippet}[2]{%
  \expandafter\newcommand\csname snippet--#1\endcsname{%
  \begin{quote}
  \vspace{2pt}
      \begin{isabelle}
        \scalebox{1.0}{\begin{minipage}{\textwidth}\textcolor{blue}{#2}\end{minipage}}
      \end{isabelle}
      \vspace{2pt}
    \end{quote}}
    
    }
\newcommand{\Snippet}[1]{%
        \ifcsname snippet--#1\endcsname{\csname snippet--#1\endcsname}%
  \else+++++++ERROR: Snippet ``#1 not defined+++++++ \fi}
\isactrlesub {\isachardot}{\kern0pt}\ \isactrlbsub b\ {\isasymcirc}\isactrlsub {\isasymrightarrow}\ u\isactrlesub \isactrlsub V\ v\ {\isacharequal}{\kern0pt}\ \isactrlbsub b{\isacharprime}{\kern0pt}\isactrlesub \isactrlsub V\ v{\isacharparenright}{\kern0pt}\ {\isasymand}\isanewline
\isactrlesub {\isachardot}{\kern0pt}\ \isactrlbsub b\ {\isasymcirc}\isactrlsub {\isasymrightarrow}\ u\isactrlesub \isactrlsub E\ e\ {\isacharequal}{\kern0pt}\ \isactrlbsub b{\isacharprime}{\kern0pt}\isactrlesub \isactrlsub E\ e{\isacharparenright}{\kern0pt}\ {\isasymand}\isanewline
\isactrlesub {\isachardot}{\kern0pt}\ \isactrlbsub c\ {\isasymcirc}\isactrlsub {\isasymrightarrow}\ u\isactrlesub \isactrlsub V\ v\ {\isacharequal}{\kern0pt}\ \isactrlbsub c{\isacharprime}{\kern0pt}\isactrlesub \isactrlsub V\ v{\isacharparenright}{\kern0pt}\ {\isasymand}\isanewline
\isactrlesub {\isachardot}{\kern0pt}\ \isactrlbsub c\ {\isasymcirc}\isactrlsub {\isasymrightarrow}\ u\isactrlesub \isactrlsub E\ e\ {\isacharequal}{\kern0pt}\ \isactrlbsub c{\isacharprime}{\kern0pt}\isactrlesub \isactrlsub E\ e{\isacharparenright}{\kern0pt}{\isacharparenright}{\kern0pt}\isanewline
\isactrlesub {\isachardot}{\kern0pt}\ \isactrlbsub b\ {\isasymcirc}\isactrlsub {\isasymrightarrow}\ u\isactrlesub \isactrlsub V\ v\ {\isacharequal}{\kern0pt}\ \isactrlbsub b{\isacharprime}{\kern0pt}\isactrlesub \isactrlsub V\ v{\isacharparenright}{\kern0pt}\ \isanewline
\isactrlesub {\isachardot}{\kern0pt}\ \isactrlbsub b\ {\isasymcirc}\isactrlsub {\isasymrightarrow}\ u\isactrlesub \isactrlsub E\ e\ {\isacharequal}{\kern0pt}\ \isactrlbsub b{\isacharprime}{\kern0pt}\isactrlesub \isactrlsub E\ e{\isacharparenright}{\kern0pt}\isanewline
\isactrlesub {\isachardot}{\kern0pt}\ \isactrlbsub c\ {\isasymcirc}\isactrlsub {\isasymrightarrow}\ u\isactrlesub \isactrlsub V\ v\ {\isacharequal}{\kern0pt}\ \isactrlbsub c{\isacharprime}{\kern0pt}\isactrlesub \isactrlsub V\ v{\isacharparenright}{\kern0pt}\ \isanewline
\isactrlesub {\isachardot}{\kern0pt}\ \isactrlbsub c\ {\isasymcirc}\isactrlsub {\isasymrightarrow}\ u\isactrlesub \isactrlsub E\ e\ {\isacharequal}{\kern0pt}\ \isactrlbsub c{\isacharprime}{\kern0pt}\isactrlesub \isactrlsub E\ e{\isacharparenright}{\kern0pt}{\isacharparenright}{\kern0pt}{\isacartoucheclose}%
\isactrlesub {\isacharparenright}{\kern0pt}{\isacartoucheclose}\isanewline
\isactrlesub {\isacharparenright}{\kern0pt}{\isacartoucheclose}%
\isactrlesub {\isacharparenright}{\kern0pt}\ {\isasymand}\ {\isacharparenleft}{\kern0pt}s\isactrlbsub R\isactrlesub \ e\ {\isasymin}\ \isactrlbsub b\isactrlesub \isactrlsub V\ {\isacharbackquote}{\kern0pt}\ V\isactrlbsub K\isactrlesub {\isacharparenright}{\kern0pt}\ \isanewline
\isactrlesub {\isachardot}{\kern0pt}\ {\isacharparenleft}{\kern0pt}\isactrlbsub b\isactrlesub \isactrlsub V\ a\ {\isacharequal}{\kern0pt}\ x\ {\isasymand}\ \isactrlbsub c\isactrlesub \isactrlsub V\ a\ {\isacharequal}{\kern0pt}\ y{\isacharparenright}{\kern0pt}{\isacharparenright}{\kern0pt}{\isacartoucheclose}\ \isakeyword{and}\isanewline
\isactrlesub {\isachardot}{\kern0pt}\ {\isacharparenleft}{\kern0pt}\isactrlbsub b\isactrlesub \isactrlsub E\ a\ {\isacharequal}{\kern0pt}\ x\ {\isasymand}\ \isactrlbsub c\isactrlesub \isactrlsub E\ a\ {\isacharequal}{\kern0pt}\ y{\isacharparenright}{\kern0pt}{\isacharparenright}{\kern0pt}{\isacartoucheclose}\ \isakeyword{and}\isanewline
\isactrlesub {\isachardot}{\kern0pt}\ \isactrlbsub g\isactrlesub \isactrlsub V\ v\ {\isacharequal}{\kern0pt}\ x{\isacharparenright}{\kern0pt}\ {\isasymor}\ {\isacharparenleft}{\kern0pt}{\isasymexists}v\ {\isasymin}\ V\isactrlbsub B\isactrlesub {\isachardot}{\kern0pt}\ \isactrlbsub f\isactrlesub \isactrlsub V\ v\ {\isacharequal}{\kern0pt}\ x{\isacharparenright}{\kern0pt}{\isacartoucheclose}\ \isakeyword{and}\isanewline
\isactrlesub {\isachardot}{\kern0pt}\ \isactrlbsub g\isactrlesub \isactrlsub E\ e\ {\isacharequal}{\kern0pt}\ x{\isacharparenright}{\kern0pt}\ {\isasymor}\ {\isacharparenleft}{\kern0pt}{\isasymexists}e\ {\isasymin}\ E\isactrlbsub B\isactrlesub {\isachardot}{\kern0pt}\ \isactrlbsub f\isactrlesub \isactrlsub E\ e\ {\isacharequal}{\kern0pt}\ x{\isacharparenright}{\kern0pt}{\isacartoucheclose}\isanewline
\isactrlbsub {\isacharunderscore}{\kern0pt}\isactrlesub {\isachardoublequoteclose}{\isacharparenright}{\kern0pt}\isanewline
\isactrlbsub {\isacharunderscore}{\kern0pt}\isactrlesub {\isachardoublequoteclose}{\isacharparenright}{\kern0pt}\isanewline
\isactrlbsub {\isacharunderscore}{\kern0pt}\isactrlesub {\isachardoublequoteclose}{\isacharparenright}{\kern0pt}\isanewline
\isactrlbsub {\isacharunderscore}{\kern0pt}\isactrlesub {\isachardoublequoteclose}{\isacharparenright}{\kern0pt}\isanewline
\isactrlbsub {\isacharunderscore}{\kern0pt}\isactrlesub {\isachardoublequoteclose}{\isacharparenright}{\kern0pt}\isanewline
\isactrlbsub {\isacharunderscore}{\kern0pt}\isactrlesub {\isachardoublequoteclose}{\isacharparenright}{\kern0pt}%
\isactrlesub {\isachardot}{\kern0pt}\ {\isacharparenleft}{\kern0pt}\isactrlbsub b\isactrlesub \isactrlsub V\ a\ {\isacharequal}{\kern0pt}\ x\ {\isasymand}\ \isactrlbsub c\isactrlesub \isactrlsub V\ a\ {\isacharequal}{\kern0pt}\ y{\isacharparenright}{\kern0pt}{\isacartoucheclose}%
\isactrlesub {\isachardot}{\kern0pt}\ {\isacharparenleft}{\kern0pt}\isactrlbsub b\isactrlesub \isactrlsub V\ a\ {\isacharequal}{\kern0pt}\ x\ {\isasymand}\ \isactrlbsub c\isactrlesub \isactrlsub V\ a\ {\isacharequal}{\kern0pt}\ y{\isacharparenright}{\kern0pt}{\isacartoucheclose}%
\isactrlbsub {\isacharunderscore}{\kern0pt}\isactrlesub \isactrlsub V{\isachardoublequoteclose}{\isacharparenright}{\kern0pt}\isanewline
\isactrlbsub {\isacharunderscore}{\kern0pt}\isactrlesub \isactrlsub E{\isachardoublequoteclose}{\isacharparenright}{\kern0pt}%
\isactrlsub {\isasymrightarrow}{\isachardoublequoteclose}\ {\isadigit{5}}{\isadigit{5}}{\isacharparenright}{\kern0pt}\ \isakeyword{where}\isanewline
\isactrlesub {\isachardot}{\kern0pt}\ \isactrlbsub f\isactrlesub \isactrlsub V\ v{\isacharprime}{\kern0pt}\ {\isacharequal}{\kern0pt}\ v{\isacartoucheclose}\ \isakeyword{and}\isanewline
\isactrlesub {\isachardot}{\kern0pt}\ \isactrlbsub f\isactrlesub \isactrlsub E\ e{\isacharprime}{\kern0pt}\ {\isacharequal}{\kern0pt}\ e{\isacartoucheclose}%
\isactrlesub {\isachardot}{\kern0pt}\ \isactrlbsub g\ {\isasymcirc}\isactrlsub {\isasymrightarrow}\ f\isactrlesub \isactrlsub V\ v\ {\isacharequal}{\kern0pt}\ v{\isacartoucheclose}\isanewline
\isactrlesub {\isachardot}{\kern0pt}\ \isactrlbsub g\ {\isasymcirc}\isactrlsub {\isasymrightarrow}\ f\isactrlesub \isactrlsub E\ e\ {\isacharequal}{\kern0pt}\ e{\isacartoucheclose}\isanewline
\isactrlesub {\isachardot}{\kern0pt}\ \isactrlbsub f\ {\isasymcirc}\isactrlsub {\isasymrightarrow}\ g\isactrlesub \isactrlsub V\ v\ {\isacharequal}{\kern0pt}\ v{\isacartoucheclose}\isanewline
\isactrlesub {\isachardot}{\kern0pt}\ \isactrlbsub f\ {\isasymcirc}\isactrlsub {\isasymrightarrow}\ g\isactrlesub \isactrlsub E\ e\ {\isacharequal}{\kern0pt}\ e{\isacartoucheclose}\isanewline
\isactrlesub {\isachardot}{\kern0pt}\ \isactrlbsub c\isactrlsub {\isadigit{2}}\ {\isasymcirc}\isactrlsub {\isasymrightarrow}\ i\isactrlesub \isactrlsub V\ v\ {\isacharequal}{\kern0pt}\ \isactrlbsub g\isactrlsub {\isadigit{1}}\isactrlesub \isactrlsub V\ v{\isacharparenright}{\kern0pt}\ \isanewline
\isactrlesub {\isachardot}{\kern0pt}\ \isactrlbsub c\isactrlsub {\isadigit{2}}\ {\isasymcirc}\isactrlsub {\isasymrightarrow}\ i\isactrlesub \isactrlsub E\ e\ {\isacharequal}{\kern0pt}\ \isactrlbsub g\isactrlsub {\isadigit{1}}\isactrlesub \isactrlsub E\ e{\isacharparenright}{\kern0pt}{\isacartoucheclose}\ \isakeyword{and}\isanewline
\isactrlesub {\isachardot}{\kern0pt}\ \isactrlbsub c\isactrlsub {\isadigit{1}}\ {\isasymcirc}\isactrlsub {\isasymrightarrow}\ j\isactrlesub \isactrlsub V\ v\ {\isacharequal}{\kern0pt}\ \isactrlbsub g\isactrlsub {\isadigit{2}}\isactrlesub \isactrlsub V\ v{\isacharparenright}{\kern0pt}\ \isanewline
\isactrlesub {\isachardot}{\kern0pt}\ \isactrlbsub c\isactrlsub {\isadigit{1}}\ {\isasymcirc}\isactrlsub {\isasymrightarrow}\ j\isactrlesub \isactrlsub E\ e\ {\isacharequal}{\kern0pt}\ \isactrlbsub g\isactrlsub {\isadigit{2}}\isactrlesub \isactrlsub E\ e{\isacharparenright}{\kern0pt}{\isacartoucheclose}%
\isactrlesub {\isachardot}{\kern0pt}\ \isactrlbsub c\isactrlsub {\isadigit{2}}\ {\isasymcirc}\isactrlsub {\isasymrightarrow}\ i\isactrlesub \isactrlsub V\ v\ {\isacharequal}{\kern0pt}\ \isactrlbsub f\isactrlsub {\isadigit{1}}\isactrlesub \isactrlsub V\ v{\isacharparenright}{\kern0pt}\ \isanewline
\isactrlesub {\isachardot}{\kern0pt}\ \isactrlbsub c\isactrlsub {\isadigit{2}}\ {\isasymcirc}\isactrlsub {\isasymrightarrow}\ i\isactrlesub \isactrlsub E\ e\ {\isacharequal}{\kern0pt}\ \isactrlbsub f\isactrlsub {\isadigit{1}}\isactrlesub \isactrlsub E\ e{\isacharparenright}{\kern0pt}{\isacartoucheclose}\ \isakeyword{and}\isanewline
\isactrlesub {\isachardot}{\kern0pt}\ \isactrlbsub h\isactrlsub {\isadigit{1}}\ {\isasymcirc}\isactrlsub {\isasymrightarrow}\ j\isactrlesub \isactrlsub V\ v\ {\isacharequal}{\kern0pt}\ \isactrlbsub g\isactrlsub {\isadigit{2}}\isactrlesub \isactrlsub V\ v{\isacharparenright}{\kern0pt}\ \isanewline
\isactrlesub {\isachardot}{\kern0pt}\ \isactrlbsub h\isactrlsub {\isadigit{1}}\ {\isasymcirc}\isactrlsub {\isasymrightarrow}\ j\isactrlesub \isactrlsub E\ e\ {\isacharequal}{\kern0pt}\ \isactrlbsub g\isactrlsub {\isadigit{2}}\isactrlesub \isactrlsub E\ e{\isacharparenright}{\kern0pt}{\isacartoucheclose}%
\isactrlesub {\isachardot}{\kern0pt}\ \isactrlbsub y\isactrlesub \isactrlsub E\ e\ {\isacharequal}{\kern0pt}\ \isactrlbsub x\isactrlesub \isactrlsub E\ e{\isacharparenright}{\kern0pt}\ \isanewline
\isactrlesub {\isachardot}{\kern0pt}\ \isactrlbsub y\isactrlesub \isactrlsub V\ v\ {\isacharequal}{\kern0pt}\ \isactrlbsub x\isactrlesub \isactrlsub V\ v{\isacharparenright}{\kern0pt}{\isacharparenright}{\kern0pt}{\isacharparenright}{\kern0pt}{\isachardoublequoteclose}%
\isactrlesub {\isachardot}{\kern0pt}\ \isactrlbsub x\ {\isasymcirc}\isactrlsub {\isasymrightarrow}\ {\isacharparenleft}{\kern0pt}to{\isacharunderscore}{\kern0pt}nmorph\ b{\isacharparenright}{\kern0pt}\isactrlesub \isactrlsub V\ v\ {\isacharequal}{\kern0pt}\ \isactrlbsub y\ {\isasymcirc}\isactrlsub {\isasymrightarrow}\ {\isacharparenleft}{\kern0pt}to{\isacharunderscore}{\kern0pt}nmorph\ c{\isacharparenright}{\kern0pt}\isactrlesub \isactrlsub V\ v{\isacharsemicolon}{\kern0pt}\isanewline
\isactrlesub {\isachardot}{\kern0pt}\ \isactrlbsub x\ {\isasymcirc}\isactrlsub {\isasymrightarrow}\ {\isacharparenleft}{\kern0pt}to{\isacharunderscore}{\kern0pt}nmorph\ b{\isacharparenright}{\kern0pt}\isactrlesub \isactrlsub E\ e\ {\isacharequal}{\kern0pt}\ \isactrlbsub y\ {\isasymcirc}\isactrlsub {\isasymrightarrow}\ {\isacharparenleft}{\kern0pt}to{\isacharunderscore}{\kern0pt}nmorph\ c{\isacharparenright}{\kern0pt}\isactrlesub \isactrlsub E\ e{\isasymrbrakk}\ \isanewline
\isactrlesub {\isachardot}{\kern0pt}\ \isactrlbsub u\ {\isasymcirc}\isactrlsub {\isasymrightarrow}\ {\isacharparenleft}{\kern0pt}to{\isacharunderscore}{\kern0pt}nmorph\ f{\isacharparenright}{\kern0pt}\isactrlesub \isactrlsub V\ v\ {\isacharequal}{\kern0pt}\ \isactrlbsub x\isactrlesub \isactrlsub V\ v{\isacharparenright}{\kern0pt}\ {\isasymand}\isanewline
\isactrlesub {\isachardot}{\kern0pt}\ \isactrlbsub u\ {\isasymcirc}\isactrlsub {\isasymrightarrow}\ {\isacharparenleft}{\kern0pt}to{\isacharunderscore}{\kern0pt}nmorph\ f{\isacharparenright}{\kern0pt}\isactrlesub \isactrlsub E\ e\ {\isacharequal}{\kern0pt}\ \isactrlbsub x\isactrlesub \isactrlsub E\ e{\isacharparenright}{\kern0pt}\ {\isasymand}\isanewline
\isactrlesub {\isachardot}{\kern0pt}\ \isactrlbsub u\ {\isasymcirc}\isactrlsub {\isasymrightarrow}\ {\isacharparenleft}{\kern0pt}to{\isacharunderscore}{\kern0pt}nmorph\ g{\isacharparenright}{\kern0pt}\isactrlesub \isactrlsub V\ v\ {\isacharequal}{\kern0pt}\ \isactrlbsub y\isactrlesub \isactrlsub V\ v{\isacharparenright}{\kern0pt}\ {\isasymand}\isanewline
\isactrlesub {\isachardot}{\kern0pt}\ \isactrlbsub u\ {\isasymcirc}\isactrlsub {\isasymrightarrow}\ {\isacharparenleft}{\kern0pt}to{\isacharunderscore}{\kern0pt}nmorph\ g{\isacharparenright}{\kern0pt}\isactrlesub \isactrlsub E\ e\ {\isacharequal}{\kern0pt}\ \isactrlbsub y\isactrlesub \isactrlsub E\ e{\isacharparenright}{\kern0pt}{\isacharparenright}{\kern0pt}\isanewline
\isactrlesub {\isachardot}{\kern0pt}\ \isactrlbsub x\ {\isasymcirc}\isactrlsub {\isasymrightarrow}\ b\isactrlesub \isactrlsub V\ v\ {\isacharequal}{\kern0pt}\ \isactrlbsub y\ {\isasymcirc}\isactrlsub {\isasymrightarrow}\ c\isactrlesub \isactrlsub V\ v{\isacartoucheclose}\isanewline
\isactrlesub {\isachardot}{\kern0pt}\ \isactrlbsub x\ {\isasymcirc}\isactrlsub {\isasymrightarrow}\ b\isactrlesub \isactrlsub E\ e\ {\isacharequal}{\kern0pt}\ \isactrlbsub y\ {\isasymcirc}\isactrlsub {\isasymrightarrow}\ c\isactrlesub \isactrlsub E\ e{\isacartoucheclose}\isanewline
\isactrlesub {\isachardot}{\kern0pt}\ \isactrlbsub u\ {\isasymcirc}\isactrlsub {\isasymrightarrow}\ f\isactrlesub \isactrlsub V\ v\ {\isacharequal}{\kern0pt}\ \isactrlbsub x\isactrlesub \isactrlsub V\ v{\isacharparenright}{\kern0pt}\ {\isasymand}\isanewline
\isactrlesub {\isachardot}{\kern0pt}\ \isactrlbsub u\ {\isasymcirc}\isactrlsub {\isasymrightarrow}\ f\isactrlesub \isactrlsub E\ e\ {\isacharequal}{\kern0pt}\ \isactrlbsub x\isactrlesub \isactrlsub E\ e{\isacharparenright}{\kern0pt}\ {\isasymand}\isanewline
\isactrlesub {\isachardot}{\kern0pt}\ \isactrlbsub u\ {\isasymcirc}\isactrlsub {\isasymrightarrow}\ g\isactrlesub \isactrlsub V\ v\ {\isacharequal}{\kern0pt}\ \isactrlbsub y\isactrlesub \isactrlsub V\ v{\isacharparenright}{\kern0pt}\ {\isasymand}\isanewline
\isactrlesub {\isachardot}{\kern0pt}\ \isactrlbsub u\ {\isasymcirc}\isactrlsub {\isasymrightarrow}\ g\isactrlesub \isactrlsub E\ e\ {\isacharequal}{\kern0pt}\ \isactrlbsub y\isactrlesub \isactrlsub E\ e{\isacharparenright}{\kern0pt}{\isacharparenright}{\kern0pt}\ D{\isacartoucheclose}%
\isactrlesub {\isachardot}{\kern0pt}\ \isactrlbsub g\isactrlesub \isactrlsub E\ e\ {\isacharequal}{\kern0pt}\ x{\isacharparenright}{\kern0pt}\ {\isasymor}\ {\isacharparenleft}{\kern0pt}{\isasymexists}e\ {\isasymin}\ E\isactrlbsub B\isactrlesub {\isachardot}{\kern0pt}\ \isactrlbsub f\isactrlesub \isactrlsub E\ e\ {\isacharequal}{\kern0pt}\ x{\isacharparenright}{\kern0pt}{\isacartoucheclose}%
\isactrlesub {\isachardot}{\kern0pt}\ \isactrlbsub u\ {\isasymcirc}\isactrlsub {\isasymrightarrow}\ f\isactrlesub \isactrlsub V\ v\ {\isacharequal}{\kern0pt}\ \isactrlbsub f{\isacharprime}{\kern0pt}\isactrlesub \isactrlsub V\ v{\isacharparenright}{\kern0pt}\ {\isasymand}\ {\isacharparenleft}{\kern0pt}{\isasymforall}e\ {\isasymin}\ E\isactrlbsub B\isactrlesub {\isachardot}{\kern0pt}\ \isactrlbsub u\ {\isasymcirc}\isactrlsub {\isasymrightarrow}\ f\isactrlesub \isactrlsub E\ e\ {\isacharequal}{\kern0pt}\ \isactrlbsub f{\isacharprime}{\kern0pt}\isactrlesub \isactrlsub E\ e{\isacharparenright}{\kern0pt}\isanewline
\isactrlesub {\isachardot}{\kern0pt}\ \isactrlbsub u\ {\isasymcirc}\isactrlsub {\isasymrightarrow}\ g\isactrlesub \isactrlsub V\ v\ {\isacharequal}{\kern0pt}\ \isactrlbsub g{\isacharprime}{\kern0pt}\isactrlesub \isactrlsub V\ v{\isacharparenright}{\kern0pt}\ {\isasymand}\ {\isacharparenleft}{\kern0pt}{\isasymforall}e\ {\isasymin}\ E\isactrlbsub C\isactrlesub {\isachardot}{\kern0pt}\ \isactrlbsub u\ {\isasymcirc}\isactrlsub {\isasymrightarrow}\ g\isactrlesub \isactrlsub E\ e\ {\isacharequal}{\kern0pt}\ \isactrlbsub g{\isacharprime}{\kern0pt}\isactrlesub \isactrlsub E\ e{\isacharparenright}{\kern0pt}{\isacharparenright}{\kern0pt}{\isacartoucheclose}%
\begin{document}

\title[Formalising the Double-Pushout Approach to Graph Transformation]{Formalising the Double-Pushout Approach to\\ Graph Transformation}
% \titlecomment{{\lsuper*}OPTIONAL comment concerning the title, \eg,
%   if a variant or an extended abstract of the paper has appeared elsewhere.}
% \thanks{thanks, optional.}	%optional

% affiliations are numbered automatically with a, b, c (see below)
% use the optional argument to indicate the affiliation(s) of each author
% omit the argument if there is only one author, or only one affiliation
\author[R.~S\"oldner]{Robert S\"oldner\lmcsorcid{0000-0001-5295-3493}}
\author[D.~Plump]{Detlef Plump\lmcsorcid{0000-0002-1148-822X}}

% affiliation 1 (automatically numbered a)
\address{University of York, UK}	%optional
% write emails for all authors having that affiliation
\email{rs2040@york.ac.uk, detlef.plump@york.ac.uk}  %optional

%% required for running head on odd and even pages, use suitable
%% abbreviations in case of long titles and many authors:

%%%%%%%%%%%%%%%%%%%%%%%%%%%%%%%%%%%%%%%%%%%%%%%%%%%%%%%%%%%%%%%%%%%%%%%%%%%

%% the abstract has to PRECEDE the command \maketitle:
%% be sure not to issue the \maketitle command twice!

\begin{abstract}
  \noindent In this paper, we utilize Isabelle/HOL to develop a formal framework for the basic theory of double-pushout graph transformation. Our work includes defining essential concepts like graphs, morphisms, pushouts, and pullbacks, and demonstrating their properties. We establish the uniqueness of derivations, drawing upon Rosen’s 1975 research, and verify the Church-Rosser theorem using Ehrig’s and Kreowski’s 1976 proof, thereby demonstrating the effectiveness of our formalisation approach. The paper details our methodology in employing Isabelle/HOL, including key design decisions that shaped the current iteration. We explore the technical complexities involved in applying higher-order logic, aiming to give readers an insightful perspective into the engaging aspects of working with an Interactive Theorem Prover. This work emphasizes the increasing importance of formal verification tools in clarifying complex mathematical concepts.
\end{abstract}

\maketitle

\section*{Introduction}\label{S:one}
Formal methods are instrumental in minimizing software defects by rigorously verifying software correctness against their specification. These methods employ mathematical techniques to assure that software implementations align with predefined specifications and requirements, and include model checking, theorem proving, and symbolic interpretation~\cite{Huth-Ryan04a}.

Interactive theorem provers (ITPs), have validated significant mathematical theorems such as the Four Colour Theorem~\cite{Gonthier07a}, the Prime Number Theorem~\cite{Avigad-Donnelly-Gray-Raff07a}, and the Kepler Conjecture~\cite{Hales15a}. They have also been instrumental in verifying specific algorithms and software components, like in the cases of the seL4 Microkernel~\cite{Klein-etal09a} and the CompCert compiler~\cite{Leroy09a}. These instances showcase the efficacy of ITPs in handling complex theories and systems.

Our research strives to rigorously prove fundamental results in the double-pushout approach to graph transformations~\cite{Ehrig-Ehrig-Prange-Taentzer06a}, contributing to our overarching goal of verifying specific programs in the GP\,2 graph programming language~\cite{Plump16b, Campbell-Courtehoute-Plump21a} using the Isabelle proof assistant~\cite{Nipkow2014}. 

This paper not only synthesizes previously published material~\cite{Soeldner-Plump23a} but also introduces new essential technical details, including:
\begin{itemize}
    \item The formalisation of graphs, morphisms, and rules, including a discussion of our design decisions.
    \item Additional details on our gluing construction.
    \item We added additional proofs, such as the fact that pushouts preserve surjectivity.
    \item We adopted Theorem~\ref{thm:church-rosser} to cover the commutativity of sequentially independent direct derivations.
\end{itemize}

While earlier efforts like Strecker's~\cite{Strecker18a} have applied ITPs to graph transformation, they have not fully explored the extensive theoretical results available for the double-pushout approach. To the best of our knowledge, our work is the first to formalise the foundational results in double-pushout graph transformation. Further details on related work can be found in Section~\ref{sec:related-work}.

In the double-pushout approach, our aim is to abstract from specific node and edge identifiers. To achieve this, we introduce independent type variables representing the types of nodes and edges for each graph. This differentiation enables the use of Isabelle's typechecker during development to prevent unintended mix-ups of node or edge identifiers between graphs. However, we encounter a challenge with Isabelle's inability to quantify over new type variables within locale definitions, which complicates the formalisation of the universal properties of pushouts and pullbacks. 

To mitigate this issue, we adopt a strategy in our locales where we define node and edge identifiers as natural numbers. This decision, among others such as potentially using a unified type for both node and edge identifiers, necessitates alternate constructions and the consideration of constraints like ensuring a sufficiently large universe of identifiers to execute the disjoint union. We will discuss these critical design choices in the technical Section~\ref{sec:DPO}.

To establish the uniqueness of direct derivations, our proof strategy is divided into two distinct stages. Initially, we focus on demonstrating the uniqueness of the pushout complement. Following this, we prove the uniqueness of the pushout object itself, assuming an isomorphic pushout complement. In the first stage, our approach is inspired by Lack and Sobocinski's methodology~\cite{Lack-Sobocinski04a}. However, we adapt their method by incorporating the pushout characterisation and the reduced chain condition~\cite{EK79}, along with utilising the composition and decomposition lemmas for pushouts and pullbacks.

For the proof of the Church-Rosser theorem, we base our methodology on Ehrig and Kreowski's original work~\cite{EK76}. Here too, the pushout characterisation and the reduced chain condition is crucial. We are further employing the composition and decomposition lemmas for pushouts and pullbacks. Additionally, we introduce a lemma that facilitates transitioning between pullbacks and pushouts.

While we recognize the potential to generalise our proofs from graph constructions to adhesive categories, this is not our primary objective.
Our decision stems from two considerations: firstly, to avoid the complexity involved in handling an abstract class of categories, such as van Kampen squares; and secondly, because our research involves both abstract concepts like pushouts and pullbacks, as well as set-theoretic constructions for these concepts. Providing the corresponding constructions for all adhesive categories would be impractical.

Our long-term goal is to establish a foundation for verifying graph programs in GP\,2, a language fundamentally based on the double-pushout approach to graph transformation. We aspire to provide both interactive and automatic tool support for formal reasoning in graph transformation languages. An effective proof assistant for GP\,2 will necessitate concrete definitions of graphs, attributes, rules, derivations, and more, which guides our focus towards graph transformation concepts.

% We acknowledge the academic interest in formalizing the theory of adhesive categories. However, pursuing such a comprehensive development would deviate from our central goal and dilute our efforts in advancing the field of graph program verification.

The remainder of this paper is organized as follows:
\begin{itemize}
    \item Section~\ref{sec:hol} offers a concise introduction to the Isabelle Proof Assistant, focusing on the constructs employed in our research.
    \item In Section~\ref{sec:DPO}, we delve into the fundamentals of DPO graph transformation and our corresponding formalisation in Isabelle. This section covers the formalisation of key concepts such as graphs, morphisms, and rules.
    \item Section~\ref{sec:uniqueness-direct-derivation} discusses the uniqueness of direct derivations, including a detailed proof of the uniqueness of pushout complements.
    \item The Church-Rosser theorem, which posits that parallel independent direct derivations can be rearranged to conclude in a common graph, is explored in Section~\ref{sec:church-rosser}.
    \item Section~\ref{sec:related-work} presents a brief review of related work in the field, providing context and background to our study.
    \item Finally, Section~\ref{sec:discussion} summarises our key findings and outlines potential avenues for future research.
\end{itemize}
All formalisation artefacts, including the complete Isabelle theories, are available on GitHub at \url{https://github.com/UoYCS-plasma/DPO-Formalisation}.

\section{Isabelle/HOL}\label{sec:hol}
Isabelle is a versatile, interactive theorem prover that operates on the principles of the LCF (Logic for Computable Functions) approach. Central to its design is a compact meta-logical proof kernel responsible for proof checking, a feature that significantly bolsters confidence in the prover's soundness. When referring to Isabelle/HOL, we are discussing the higher-order logic instantiation within Isabelle, widely recognized as the most mature calculus in its suite~\cite{Paulson2019}. This instantiation is characterised by a strong typing system that supports polymorphic, higher-order functions~\cite{Brucker2018}.

In Isabelle/HOL, type variables are distinctly marked by a preceding apostrophe. For instance, a term \texttt{f} of type \texttt{'a} is represented as \texttt{f :: 'a}. Our formalisation leverages \texttt{locales}, a sophisticated mechanism designed for authoring and structuring parametric specifications. A locale encapsulates a set of parameters (\(x_1 \dots x_n\)), assumptions (\(A_1 \dots A_m\)), and a resulting theorem, expressed as \(\bigwedge x_1 \dots x_n. \llbracket A_1; \dots ; A_m \rrbracket \Longrightarrow C\). This approach facilitates the effective combination and enhancement of contexts, yielding a representation that is both clear and maintainable. For further details, Ballarin~\cite{Ballarin} provides an extensive introduction.

Additionally, we employ \emph{intelligible semi-automated reasoning} (Isar), Isabelle's framework for writing structured proofs~\cite{Wenzel1999}. Unlike 'apply-scripts', which linearly execute deduction rules, Isar adopts a structured, organized approach. This methodology significantly improves the readability and maintainability of proofs~\cite{Nipkow2014}. A comprehensive introduction to Isabelle/HOL is available in~\cite{Nipkow2014}.

\section{DPO Graph Transformation in Isabelle}
\label{sec:DPO}
Our formalisation of DPO based graph transformations within the Isabelle/HOL proof assistant is centred around \emph{(finite) directed labelled graphs}.

\subsection{Graphs}
These graphs are defined over an abstract \emph{label alphabet} comprising distinct sets of node and edge labels. We intentionally left labels abstract, as it ensures that the properties we discuss are universally applicable across specific label alphabets. Our definition is inclusive, we accommodate parallel edges and allow loops.

\begin{defi}[Graph]\label{def:graph}
  A \emph{graph} \(G = (V, E, s, t, l, m)\) over the alphabet \(\mathcal{L}\) is a system where
  \(V\) is the finite set of nodes, \(E\) is the finite set of edges,
  \(s,t \colon E \to V\) are functions assigning the source and target to each edge,
  \(l \colon V \to \mathcal{L}_V\) and \(m \colon E \to \mathcal{L}_E\) are functions assigning
  a label to each node and edge.\qed
\end{defi}

\begin{figure}
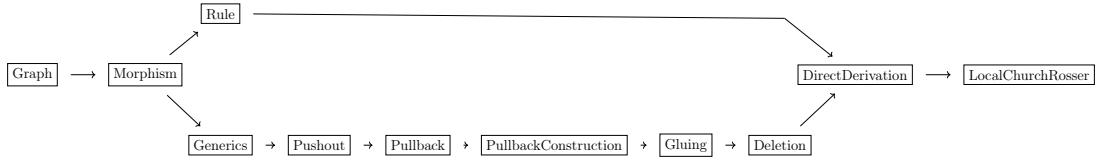

    \centering

    \scalebox{0.5}{\tikzfig{overview}}
    \caption{\texttt{locale} hierarchy reflecting our formalisation}
    \label{fig:locales}
\end{figure}
The formalisation process within a Proof Assistant mandates adherence to a specific formalism; in our context, this is classical high-order logic. 
The structure of this work involves a range of design choices. Primarily, we adopt Noschinskis~\cite{Noschinski2015} approach, utilizing record types to achieve a concise representation of components, and employing locales to systematically impose properties on these components. Fig.~\ref{fig:locales} depicts our developed formalisation from the perspective of the locale hierarchy. In this figure, the arrow signifies \emph{used by}. For instance, the \texttt{Graph} locale serves as a foundation for the \texttt{Morphism} locale. Similarly, \texttt{Rule} depends on \texttt{Morphism}, and \texttt{DirectDerivation} relies on \texttt{Rule}.

While portions of our work have been described in preceding work~\cite{Soeldner-Plump22a}, this manuscript will refine and update our methodology.

As initially mentioned in the introduction Section, Isabelle/HOL exclusively deals with total functions. This means that these functions are defined across the entire universe of the underlying type. 
The built-in \texttt{option} data type is a standard way to represent optional values, i.e., the presence or absence of data. This sum type can hold two types of values, represented by their constructor:
\begin{itemize}
    \item \texttt{None} used to represent the absence of a value, and
    \item \texttt{Some a} captures the presence of a value, where \texttt{a} is the actual value held.
\end{itemize}
By using the \texttt{option} data type, we can simulate partial functions as a function \texttt{'a $\Rightarrow$ 'b option}, which is commonly
known as the built-in synonym \texttt{map} with the infix \texttt{$\rightharpoonup$} notation.
Another extension found in the standard library are finite sets and finite maps, which are called \texttt{fset} and \texttt{fmap} respectively. These
data types are built on top of the standard (potentially infinite) sets and maps by using the \texttt{typedef} keyword, which instructs Isabelle/HOL to provide a new datatype with finitely many elements and the additional finite property.
\\

\noindent We consider the following different design options for our formalisation:
\begin{enumerate}
    \item The usage of finite maps and sets (\texttt{fmap} and \texttt{fset}), and
    \item rely on total functions together with the \texttt{finite} predicate.
\end{enumerate}
Although from an engineering perspective, the first option to encode properties in the type system seems appealing, it is not without challenges.
Here, we can represent graphs using a \texttt{pre\_graph} record by using the built-in \texttt{fmap} and \texttt{fset} as follows:
\Snippet{graph-fmap-fset}
The implementation of the \texttt{locale}, which enforces the corresponding properties of the underlying object, can be expressed using higher abstractions in the context of \texttt{fmap} and \texttt{fset}. In this scenario, we utilise the built-in native \texttt{fmdom} and \texttt{fmran} functions to yield a \texttt{fset} of the range and domain, mirroring the defined area. This is possible as the finite map relies on the \texttt{option} datatype. The integrity of the source function is expressed using \texttt{fmdom s$_G$ = E$_G$ $\wedge$ fmran s$_G$ |$\subseteq$| V$_G$}.
Conversely, for the total function scenario \((2)\), it becomes necessary to explicitly quantify over the set of nodes and edges to convey an equivalent statement. We write \texttt{$\forall$ v $\in$ E$_G$. s$_G$ e $\in$ V$_G$} to express the same property.

Drawing from our experience and evaluations, we chose to further pursue the (2) approach to enhance our automation endeavours.
This decision primarily stems from the underdevelopment of \texttt{fset}, \texttt{map}, and \texttt{fmap} theories. 
We found ourselves in the position of having to establish numerous fundamental properties about \texttt{fmap} and \texttt{fset} that were not previously proven. Although the Isabelle distributions \emph{lifting and transfer} package~\cite{huffman2013lifting} offered an infrastructure for applying proven properties from underlying theories, such as the \texttt{set} and \texttt{map} theory, the overall process represented a significant effort.
This aspect is further emphasized by Díaz's submissions~\cite{Finite-Map-Extras-AFP} to the Archive of Formal Proofs (AFP), which enhance the \texttt{fmap} theory with new constructs and proven properties.

The corresponding \texttt{pre\_graph} record is defined by the subsequent Isabelle code.
\Snippet{pre-graph} 

We present the final locale enforcing the graph properties in the following code block. Notably, there is no need for explicit statements about the properties of node and edge labels, since total functions are defined for the entire universe.

\Snippet{graph-locale}

The usage of the built-in \texttt{countable} typeclass for node and edge identifiers in our work addresses the Isabelle/HOL constraint against introducing new type variables within a locales definition. This is a result of the simple type theory used by the HOL family~\cite{harrison-wiedijk12a}. This method parallels the approach found in~\cite{Schirmer-Wenzel09a}, where Isabelle's \texttt{statespace} command was used to create an environment with axioms for converting concrete types to natural numbers. Our approach differs by utilizing the injective \texttt{to\_nat} function provided by the \texttt{countable} typeclass, thus enabling efficient type conversion. This specific limitation is further detailed in \cite{Soeldner-Plump22a}. The injectivity also implies the existence of the inverse, \texttt{from\_nat}. We use this technique to translate arbitrary identifiers for nodes and edges into a generic representation based on natural numbers.
\begin{defi}[Graphs over naturals]
  A graph whose node and edge identifiers are natural numbers is called a \emph{natural graph}.
\end{defi}
In Isabelle/HOL, we establish a type synonym \texttt{ngraph}, which specializes the existing \texttt{pre\_graph} structure to \texttt{nat}, Isabelle’s native type for natural numbers, for both identifiers.
\Snippet{ngraph}
The use of Isabelle/HOL's built-in functions \texttt{to\_nat} and \texttt{from\_nat} allows us to convert
between both representations. We define the conversion function \texttt{to\_ngraph}
from \texttt{ngraph} to the parameterised \texttt{pre\_graph} structure as follows:
\Snippet{to-ngraph}
We convert the set of node and edge identifiers by applying the \texttt{to\_nat} function, which the \texttt{countable} typeclass provides, to each element in these sets using the \texttt{image} function, denoted by the \texttt{`}. The source and target functions initially map from \texttt{nat} to the original identifier, then apply the source and target functions, and subsequently map back into the \texttt{nat} space. For both labelling functions, we simply convert back the \texttt{nat} identifiers before applying the corresponding labelling function.

The reverse process, converting a graph over naturals to a parameterized graph, functions similarly. Its important to note that \texttt{from\_nat (to\_nat x) = x} holds true, but the converse does not apply.
The crucial fact to remember is that the definition of \texttt{to\_nat} depends on Hilbert's choice operator \texttt{SOME}, which selects a fixed yet arbitrary value. This reliance is a key aspect to consider, as it fundamentally influences the behaviour and limitations of the \texttt{to\_nat} function. By utilizing \texttt{SOME}, \texttt{to\_nat} essentially encapsulates a degree of arbitrariness, making its outcomes and their relation to the inverse process non-trivial and significant in our context. Consequently, we often need to include type annotations to avoid the inference of different types, which would result in the selection of a different value. This becomes particularly crucial in scenarios involving nested function applications. In these cases, while the final type aligns correctly, the intermediate type might not, leading to potential discrepancies. By actively managing these type annotations, we ensure consistency and accuracy in our function applications, especially when dealing with complex nested structures where type alignment is critical for the intended outcomes.
\Snippet{from-ngraph}

\subsection{Morphisms}
Following this, we will define mappings between graphs that preserve their structure, known as graph morphisms. 
\begin{defi}[Graph morphism]\label{def:morphism}
\normalfont
A \emph{graph morphism}\/ $f \colon G \to H$ is a pair of mappings $f = (f_V \colon V_G \to V_H,\, f_E \colon E_G \to E_H)$, such that for all $e \in E_G$ and $v \in V_G$:
\begin{enumerate}
    \item \label{def:morph-src-presv}$f_V(s_G(e)) = s_H(f_E(e))$ (sources are preserved)
    \item \label{def:morph-trg-presv}$f_V(t_G(e)) = t_H(f_E(e))$ (targets are preserved)
    \item \label{def:morph-l-presv}$l_G(v) = l_H(f_V(v))$ (node labels are preserved)
    \item \label{def:morph-m-presv}$m_G(e) = m_H(f_E(e))$ (edge labels are preserved)\qed
\end{enumerate}
\end{defi}

We adopt the same design pattern for graph morphisms as we did for graphs: First, we define a record type to bundle the relevant components, which here include the node and edge mappings. Then, we establish a locale to enforce the properties of the morphism. The morphism record is defined as follows.
\Snippet{pre-morph}

It is important to emphasize that a morphism can map between two different types for each node and edge identifier. 
This aspect is vital for our subsequent work on the pushout and pullback construction, which we will discuss later in this section.

The morphism locale utilizes the hierarchical inheritance feature of locales, enabling us to depend on the graph locale for defining the source and target graphs. We enhance its functionality by fixating the \texttt{pre\_morph} record, accomplished using the \texttt{fixes} keyword. This is followed by stating the locale axioms, which assert that graphs represent a structure-preserving mapping. This design choice not only streamlines the locale's structure but also ensures a logical integration of graph properties within the morphism context.
\Snippet{morph-locale}

In this context, the graph \texttt{G} serves as the source, while \texttt{H} represents the target graph. The record encapsulating the node and edge mappings between \texttt{G} and \texttt{H} is assigned the name \texttt{f}. The two assumptions, namely \texttt{morph\_edge\_range} and \texttt{morph\_node\_range}, ensure that for every edge \texttt{e} in the source graph, \texttt{f$_E$} maps to a corresponding edge in the target graph, and similarly for nodes.
The assumptions \texttt{source\_preserve} and \texttt{target\_preserve} are critical for maintaining the structural integrity of the graph; they guarantee that the source and target structures are consistently upheld in the mapping process.
Lastly, the assumptions \texttt{label\_preserve} and \texttt{mark\_preserve} are instrumental in ensuring that the node and edge labels are retained during the mapping, thereby preserving the complete informational content of the graphs.

\begin{defi}[Special morphisms and isomorphic graphs]
\normalfont
A morphism $f$ is \emph{injective}\/ (\emph{surjective}, \emph{bijective}) if $f_V$ and $f_E$ are injective (surjective, bijective). Morphism $f$ is an \emph{inclusion} if for all $v \in V_G$ and $e \in V_E$, $f_V(v) = v$ and $f_E(e) = e$.
A bijective morphism is an \emph{isomorphism}. In this case, $G$ and $H$ are \emph{isomorphic}, which is denoted by $G \cong H$.\qed
\end{defi}

Our characterization also takes advantage of the locale inheritance mechanism. An injective morphism inherits all properties from the morphism locale and additionally asserts that both node and edge mappings are injective over the respective sets of nodes and edges of the source graph.
We achieve this by employing the built-in \texttt{inj\_on} predicate. This predicate plays a crucial role in ensuring the uniqueness of each element's mapping in the source graph to the target graph, thereby preserving the distinctiveness of graph elements in the transformation process.

\Snippet{inj-morph}

It is important to note that the \texttt{f$_V$} (\texttt{f$_E$}) mappings and graph G, with its corresponding components (\texttt{V$_G$} and \texttt{E$_G$}) which are utilized in the body of the \texttt{injective\_morphism} locale, are also supplied by the \texttt{morphism} locale.

A surjective morphism also adheres to this pattern. In this case, we declare that for each node (or edge) in the target graph \texttt{H}, there exists a corresponding node (or edge) in the source graph \texttt{G} that is mapped to it through the respective morphism function (\texttt{f$_V$} and \texttt{f$_E$}). This condition ensures that every element in the target graph has a pre-image in the source graph, signifying that the mapping covers the entire target graph.

\Snippet{surj-morph}

A bijective morphism combines the principles of both injective and surjective morphisms. 
In this framework, we assert that for every node (or edge) in the source graph \texttt{H}, there is a unique corresponding node (or edge) in the target graph \texttt{H}. We choose not to inherit from both the \texttt{injective\_morphism} and \texttt{surjective\_morphism} locales. Instead, we utilize the built-in predicate \texttt{bij\_betw}. A key reason for this decision is to leverage automation: in our experiments, the use of the existing set of proven lemmas significantly facilitated the process of discharging theorems. While it was feasible to demonstrate all these properties if we inherit these locales, we opted to capitalize on the existing infrastructure.

\Snippet{bij-morph}

Moving forward, we introduce the concept of composing two morphisms to form a singular, joint morphism. 

\begin{defi}[Morphism composition]\label{def:morphcomp}
Let $f \colon F \rightarrow G$ and $g \colon G \rightarrow H$ be graph morphisms. The \emph{morphism composition}\/ $g \circ f \colon F \rightarrow H$ is defined by $g \circ f = (g_V \circ f_V, g_E \circ f_E)$.\qed
\end{defi}

In Isabelle/HOL, we define this composition with a definition that typically hides the underlying details and often requires explicit unfolding or an established set of proven properties.
Choosing total functions enables us to use standard function composition techniques for both the node and edge mappings.

\Snippet{morph-comp}

Please note that each morphism can alter the typing of the node and edge identifiers. Therefore, in our definition, we introduced three typing parameters. Additionally, we have introduced the infix syntax \texttt{$\circ_\rightarrow$} for graph morphism composition as a form of syntactic sugar.
\begin{lem}[Well-definedness of morphism composition]\label{lem:morph-comp}
Given two morphisms \(f \colon G \to H\) and \(g \colon H \to K\), the composition \(g \circ f \colon G \to K\) is also a morphism.
\end{lem}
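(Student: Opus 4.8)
The plan is to verify that $g \circ f$ satisfies all the requirements of the \texttt{morphism} locale, i.e.\ the conditions of Definition~\ref{def:morphism}, reading off each component from the defining equation $g \circ f = (g_V \circ f_V,\, g_E \circ f_E)$ of Definition~\ref{def:morphcomp}. Since the locale for a morphism $G \to K$ first presupposes that $G$ and $K$ are graphs, I would discharge these two obligations immediately: $G$ is a graph because it is the domain of the morphism $f$, and $K$ is a graph because it is the codomain of the morphism $g$. It then remains to check the six structure-preservation axioms for the composite node map $g_V \circ f_V$ and edge map $g_E \circ f_E$.

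The two range conditions are the base cases and feed everything else. For an edge $e \in E_G$, \texttt{morph\_edge\_range} for $f$ gives $f_E(e) \in E_H$, and applying \texttt{morph\_edge\_range} for $g$ to this yields $g_E(f_E(e)) \in E_K$; since $(g \circ f)_E(e) = g_E(f_E(e))$ this is exactly what is needed, and the node range is handled identically. Source preservation is then proved by a short chain of rewrites: for $e \in E_G$,
\[ (g \circ f)_V(s_G(e)) = g_V(f_V(s_G(e))) = g_V(s_H(f_E(e))) = s_K(g_E(f_E(e))) = s_K((g \circ f)_E(e)), \]
where the first and last steps unfold the composition, the second uses \texttt{source\_preserve} for $f$, and the third uses \texttt{source\_preserve} for $g$ \emph{at the point} $f_E(e)$. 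Target preservation is symmetric, using \texttt{target\_preserve} for $f$ and $g$.

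The label and mark cases follow the same pattern but in the opposite direction: for $v \in V_G$, $l_G(v) = l_H(f_V(v)) = l_K(g_V(f_V(v))) = l_K((g \circ f)_V(v))$, where the two inner equalities are \texttt{label\_preserve} for $f$ and $g$, and \texttt{mark\_preserve} gives the edge analogue. The only non-bookkeeping subtlety — and the step I would flag as the real obstacle — is that applying the second morphism's axioms (\texttt{source\_preserve}, \texttt{label\_preserve}, and so on) at the intermediate point requires that this point actually lies in the relevant set of $H$; that is, one must first obtain $f_E(e) \in E_H$ and $f_V(v) \in V_H$ from $f$'s range conditions before $g$'s preservation properties become applicable. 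Marshalling these intermediate membership facts at each step is the crux; once they are available, every goal closes by unfolding \texttt{morph\_comp\_def} and rewriting with the matching axioms of $f$ and $g$. A secondary, purely formal point is that the composite carries three independent identifier types, so the type annotations on the intermediate graph $H$ must be kept consistent, but this is automatic here since $H$ is fixed by the hypotheses.
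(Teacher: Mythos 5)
Your proposal is correct and follows essentially the same route as the paper's Isar proof: the paper likewise opens with \texttt{intro\_locales}, discharges \texttt{graph G} and \texttt{graph K} from the morphism axioms of $f$ and $g$, and then closes each of the six \texttt{morphism\_axioms} goals by unfolding \texttt{morph\_comp\_def} and rewriting with the matching axioms of $f$ and $g$. In particular, the subtlety you flag as the crux --- supplying $f$'s range facts so that $g$'s preservation axioms apply at the intermediate point --- is exactly what the paper does by feeding \texttt{morph\_edge\_range[OF f]} and \texttt{morph\_node\_range[OF f]} to the simplifier alongside the preservation lemmas.
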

If we have two valid morphisms, say \( f: G \to H \) and \( g: H \to K \), then their composition \( g \circ f: H \to K \) is also a valid morphism. In Isabelle, we capture this property using the \texttt{lemma} keyword along with a specific name (\texttt{wf\_morph\_comp}) for future reference. 
\Snippet{wf-morph-comp}

We start by assuming the existence of two valid morphisms, which we name \( f \) and \( g \), within the \texttt{assumes} blocks. Then, by using the \texttt{shows} keyword, we set our target for proof in Isabelle.
Following this setup, Isabelle requires the user to interactively work towards achieving the stated goal, essentially proving the validity of the morphism composition.

We open the proof using the \texttt{intro\_locales} tactic, which will apply the corresponding introduction rule, resulting in the following proof state.
\begin{verbatim}
proof (state)
goal (3 subgoals):
 1. graph G
 2. graph K
 3. morphism_axioms G K (g ◦→ f)
\end{verbatim}

The output shows that we must discharge three subgoals. The first two, which states that \( G \) and \( K \) are valid graphs, are trivial due to the morphism properties. Since the morphism locale inherits the graph properties for both the source and target graphs, these subgoals follow directly. 

Using Isabelle's \emph{Isar} language, we streamline our proof writing process by focusing on the logical structure and progression of arguments. Isar, emphasizing the \emph{what} over the \emph{how}, allows us to structure complex proofs into understandable segments. We initiate each proof by specifying our goal with the \texttt{show} keyword, clearly setting our target. This approach not only simplifies the proof of our statement, which in this case follows from the morphism axioms of \( f \) and \( g \), but also keeps the proof readable and aligned with traditional mathematical reasoning, enhancing its clarity and verifiability.

To tackle the third subgoal, we focus on the \texttt{morphism\_axioms} predicate, a product of the locales infrastructure based on the properties we stated. To discharge this subgoal, we initiate the proof using the \texttt{standard} tactic, which will execute a single elimination or introduction rule according to the topmost logical connective involved~\cite{isabelle-isar-ref}.

Consequently, our task involves discharging the six morphism axioms stated earlier, which we will briefly outline below. The initial two goals address the fact that for each edge (and node) in \texttt{G}, the image created by the composition of the morphisms falls within the subset of the edges (nodes) of \texttt{K}.
\Snippet{wf-morph-comp-proof-2}
Isabelle's simplifier is able to successfully discharges all goals, provided it is equipped with the necessary lemmas. Specifically, for our case, we need to unfold the \texttt{morph\_comp} definition and, to discharge the first goal, we need to employ the \texttt{morph\_edge\_range} fact for each of the graphs \texttt{G} and \texttt{K}.

Similarly, the property asserting that the composed morphism preserves the structure under the source function is discharged. This property is defined in the morphism locale (\texttt{source\_preserve}). We address it using a method similar to the previous approach, employing relevant tactics and lemmas (in a forward directed style using the \texttt{OF} keyword).
\Snippet{wf-morph-comp-proof-3}
Finally, the preservation of node and edge labels by the morphism follows a similar pattern and is subsequently shown for completeness.
\Snippet{wf-morph-comp-proof-4}

The purpose of the earlier discussion was to offer readers an intuitive understanding of how properties are articulated and to provide a glimpse into the overall setup in Isabelle. Moving forward, the focus will shift to describing the fundamental constructs of our formalisation, highlighting selected content that is crucial to understanding the framework and methodology employed.

Please note, if a morphism is uniquely identified by its source and target, we sometimes omit
the name and write \(F \to G \to H\) to stand for the composition \(g \circ f\).
In our formalisation, we denote morphism composition using the \(\circ_{\rightarrow}\) symbol to prevent a naming clash with Isabelle's built-in function composition.

\subsection{Rules}
In DPO-based graph transformation, graph morphisms are used to define rules as the atomic units of computation.
\begin{defi}[Rule]\label{def:rule}
  A \emph{rule}\/ \((L\leftarrow K \rightarrow R)\) consists of graphs \(L,K\) and \(R\)
  over \(\mathcal{L}\) together with injective morphisms \(K \to L\) and \(K \to R\).\qed
\end{defi}
Following the same pattern as before, we use a \texttt{pre\_rule} record and establish a \texttt{rule} locale to enforce the required properties.

\Snippet{pre-rule}
% The polymorphic \texttt{pre\_rule} record already contains eight type parameters. We allow different node and edge identifiers for each graph of the rule (left-hand, right-hand side, and the interface) but keep the labelling types consistent.
% This setup offers significant flexibility by allowing morphisms between different node and edge representations, which will be utilised in the gluing and deletion construction. However, a technical drawback of this approach is the need to propagate type annotations throughout the codebase, sometimes resulting in code that is not obvious and difficult to read. We will explore this issue in more detail later in the paper.
The polymorphic \texttt{pre\_rule} record already contains eight type parameters, allowing different node and edge identifiers for each graph of the rule (left-hand side, right-hand side, and the interface) while keeping the labelling types consistent. This setup offers significant flexibility by enabling morphisms between different node and edge representations, which is particularly useful in the gluing and deletion construction.
However, this approach comes with a technical drawback: the need to propagate type annotations throughout the codebase. In some cases, this can result in code that is not immediately obvious and may be difficult to read. To address this issue and strike a balance between flexibility and readability, we have reduced the number of locale parameters in our current design by incorporating record types for rules, graphs, morphisms, and other components. This has significantly decreased the number of parameters passed directly into the locales compared to earlier versions of our formalisation.

Despite the challenges, we believe that our current setup offers a sweet spot in terms of conciseness while still maintaining the necessary formal precision. We will explore this trade-off between flexibility and readability in more detail later in the paper, particularly in the context of the gluing and deletion construction.

The locale itself inherits the properties of injective morphisms twice, from the interface to both the left-hand side and the right-hand side.
\Snippet{rule-locale}

We need to populate the \texttt{countable} typeclass restriction to allow conversion of graphs to graphs over naturals.
The main reason for this is a limitation of the locale mechanism, preventing the introduction of new type variables in its definition. Once we covered pushouts (and pullbacks), we will briefly continue this discussion.
\begin{figure}
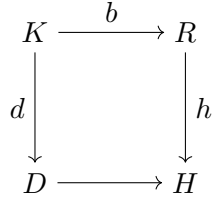

    \ctikzfig{gluing}
    \caption{Gluing diagram}
    \label{fig:gluing}
\end{figure}

The addition of graph components along a common subgraph is called \emph{gluing}. 
\subsection{Gluings}
The gluing construction below employs the disjoint union of sets A and B, typically defined as \[A + B = (A \times \{1\}) \cup (B \times \{2\}).\] It incorporates injective functions \(i_A \colon A \to A + B\) and \(i_B \colon B \to A + B\), ensuring that \(i_A(A) \cup i_B(B) = A+B\) and \(i_A \cap i_B = \emptyset\).

This does not directly translate into a higher-order logic based proof assistant, as the formalism is not expressive enough to capture the change of representation. Here, we would try to mix \texttt{$'a~\text{set}$}, where \texttt{$'a$} is a type variable of the elements, to elements of the type \texttt{$('a,~\text{nat})$}. 

To simplify Lemma~\ref{lemma:gluing}, we assume the sets of \(D\) and \(R - b(K)\) are disjoint. This assumption prevents the contamination of \(i_A\) and \(i_B\) usage throughout the lemma while maintaining respect to the more general statement which does not require the disjointness.

\begin{lem}[Gluing \cite{Ehrig79a}]\label{lemma:gluing}
  Let \(b \colon K \to R\) and \(d \colon K \to D\) be injective graph morphisms.
  Then the following defines a graph \(H\) (see Fig.~\ref{fig:gluing}), the gluing 
  of \(D\) and \(R\) according to \(d\):
  \begin{enumerate}
      \item $V_H = V_D + (V_R - b_V(V_K))$
      \item $E_H = E_D + (E_R - b_E(E_K))$
      \item \(s_H(e) = \begin{cases}
          s_D(e) & \textrm{if } e \in E_D  \\
          d_V(b_V^{-1}(s_R(e))) & \textrm{if } e \in E_R - b_E(E_K) \textrm{ and } s_R(e) \in b_V(V_K)  \\
          s_R(e) & \textrm{otherwise}
          \end{cases}
          \)
      \item $t_H$ analogous to $s_H$
      \item \(l_H = \begin{cases}
          l_D(v) & \textrm{if } v \in V_D \\
          l_R(v) & \textrm{otherwise}
          \end{cases}\)
      \item $m_H$ analogous to $l_H$
  \end{enumerate}
  Moreover, the morphism $D \to H$ is an inclusion and the injective morphism $h$ is defined for 
  all items $x$ in $R$ by $h(x) = \textbf{if } x \in R - b(K) \textbf{ then } x \textbf{ else } d(b^{-1}(x))$.
\end{lem}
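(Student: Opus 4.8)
The plan is to unpack the statement into its three separate claims and prove them in order: that $H$ is a well-formed graph, that the coproduct injection $c = (\mathrm{Inl},\mathrm{Inl})$ is a morphism $D \to H$ (the ``inclusion''), and that $h$ is an injective morphism $R \to H$. Throughout I would work with the concrete disjoint-union encoding $V_H = V_D + (V_R - b_V(V_K))$ and $E_H = E_D + (E_R - b_E(E_K))$, realised by $\mathrm{Inl}/\mathrm{Inr}$, so that case analysis on the summand tags drives nearly every subgoal.

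First I would discharge the \texttt{graph} obligations for $H$. Finiteness of $V_H$ and $E_H$ is immediate, since each is a disjoint union of a finite set ($V_D$, resp.\ $E_D$) with a subset of a finite set ($V_R$, resp.\ $E_R$), all four being finite by the \texttt{graph} axioms for $D$ and $R$. Source integrity needs a split on whether an edge lies in the $E_D$ summand or the $E_R - b_E(E_K)$ summand, with a further split in the latter case on whether $s_R(e) \in b_V(V_K)$: on the $D$-summand $s_H(\mathrm{Inl}\,e) = \mathrm{Inl}(s_D(e)) \in V_H$ by source integrity of $D$; when $s_R(e) \in b_V(V_K)$ we have $b_V^{-1}(s_R(e)) \in V_K$, so $d_V$ of it lies in $V_D$ by the range axiom for $d$; otherwise $s_R(e) \in V_R$ and $s_R(e) \notin b_V(V_K)$, placing it in the right summand. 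Target integrity is verbatim.

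The inclusion $c$ then collapses to definitional unfolding: range and the two structure-preservation clauses hold because $s_H, t_H$ are \emph{defined} to commute with $\mathrm{Inl}$ on the $D$-summand, and the label clauses hold because $l_H, m_H$ restrict to $l_D, m_D$ on $\mathrm{Inl}$. I would remark that ``inclusion'' is meant in the coproduct sense, $c_V = \mathrm{Inl}$ being the canonical injection rather than the literal identity.

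The real work lies in showing $h$ is a morphism, and I expect \texttt{source\_preserve}/\texttt{target\_preserve} to be the main obstacle. Range, labels, and injectivity reduce cleanly via the guard $v \in V_R - b_V(V_K)$ (using $\mathrm{Inl}\ne\mathrm{Inr}$ and the injectivity of both $b$ and $d$), but source preservation for an edge $e \in b_E(E_K)$ forces a genuine diagram chase. Writing $e = b_E(k)$ with $k \in E_K$, I must compose four facts: $h_E(e) = \mathrm{Inl}(d_E(k))$; the definition $s_H(\mathrm{Inl}\,\cdot) = \mathrm{Inl}(s_D(\cdot))$; the \texttt{source\_preserve} axioms of $d$ (giving $s_D(d_E(k)) = d_V(s_K(k))$) and of $b$ (giving $s_R(e) = b_V(s_K(k))$); and finally the cancellation $b_V^{-1}(b_V(s_K(k))) = s_K(k)$ justified by injectivity of $b_V$ on $V_K$, so both sides equal $\mathrm{Inl}(d_V(s_K(k)))$. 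The complementary branch $e \in E_R - b_E(E_K)$ mirrors the source-integrity split above. Managing the interaction of the partial inverse $b_V^{-1}$ with the case guards $s_R(e) \in b_V(V_K)$ is the delicate bookkeeping that makes this the crux of the whole lemma.
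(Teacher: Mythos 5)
Your proposal is correct and follows essentially the same route as the paper's own development: the paper realises the gluing in Isabelle via the sum type with \(\mathrm{Inl}/\mathrm{Inr}\) tags (the \texttt{Plus} operator), discharges the \texttt{graph} obligations for \(H\) by exactly your case analysis on summands and on \(s_R(e) \in b_V(V_K)\), takes \(c = (\mathrm{Inl},\mathrm{Inl})\) as the ``inclusion'', and defines \(h\) with the same guard, handling the partial inverse \(b^{-1}\) through \texttt{inv\_into} and cancelling it by injectivity of \(b\) just as in your diagram chase for source/target preservation. The only cosmetic difference is that the paper leaves these proof obligations to the formalised theories rather than spelling them out, so your write-up supplies details the paper omits but does not diverge from it.
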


Our formalisation of the gluing concept begins with the definition of the corresponding locale, which inherits the two injective morphisms \(b \colon K \to R\) and \(d \colon K \to D\), as follows.
\Snippet{gluing-locale}
Note, the properties of the \texttt{injective\_locale} for \(K \to R\) and \(K \to D\) are bound to \(d\) and \(r\), respectively. These names are chosen arbitrary and are used to refer to facts of the corresponding morphism, while the arguments to the locale correspond to the morphism objects.

Within the locales context, we present the construction described in Lemma~\ref{lemma:gluing}. Initially, we establish abbreviations, which unlike definitions, do not introduce new concepts that have to be unrolled; instead, they simply allow us to refer to their content by an abbreviated name. 
This construction relies on the \texttt{sum} type, featuring two constructors: \texttt{Inl} and \texttt{Inr}. We utilize the built-in function \texttt{Plus}, denoted by the infix notation \texttt{<+>}, for calculating the disjoint sum of sets. This definition relies on applying the \texttt{Inl} constructor to each element of the first set and the \texttt{Inr} constructor to each element of the second. It is important to note that this setup alters the type; we move from the universe, say \texttt{'a} of the first set, and \texttt{'b} of the second, to the sum type \texttt{'a + 'b}.
This shift facilitates the straightforward construction of the disjoint sum, but will result in some complications of our formalisation.
We considered imposing the constraint that the nodes (edges) must be disjoint as an alternative approach to the locale. However, to align with our long-term goal of constructing concrete gluings, we chose not to limit our formalisation and proceeded directly with the construction.

Following the gluing construction (cf. Lemma~\ref{lemma:gluing}), we use \texttt{V} (and \texttt{E}) to represent the elements of the gluing graph.
\Snippet{gluing-sets}
The \texttt{`} notation refers to the built-in \texttt{image} definition, which applies the supplied function to each element of the set. 
The source function now utilizes the two constructors to employ standard pattern matching for identifying the origin graph of an edge. In the first scenario, with \texttt{s (Inl e)}, given the construction of \texttt{V} and \texttt{E}, it becomes evident that this edge remains within graph \texttt{D}. Consequently, we use the source from \texttt{D}, repositioning it into the new type via the application of \texttt{Inl}. This repositioning reflects the application of the previously mentioned injective function \(i_A\), see above.
In the second case, when the source of the edge originates from \texttt{R} without the interface, it is necessary to determine whether the source of the edge projects into the interface. If it does, we must apply the inverse of \texttt{b}, followed by traversal through \texttt{d}. We obtain the element by using the \texttt{inv\_into} built-in function, which uses the Hilbert epsilon operator \texttt{SOME}.
If not, the node remains on the right-hand side, requiring only the application of the \texttt{Inr} constructor.
\Snippet{gluing-source}
The target mapping follows analogous. However, both label mappings are straightforward and only require pattern matching with the respective labelling functions.
\Snippet{gluing-label}
Finally, we have completed defining all the necessary components for the \texttt{pre\_graph} structure of a graph \texttt{H}. We use the \texttt{definition} keyword to define the final \texttt{pre\_graph} structure \texttt{H}.
\Snippet{gluing-H} 
To establish that this record represents a valid graph, we use the \texttt{sublocale} command, which necessitates us to discharge the required graph properties. 
\Snippet{gluing-H-sublocale}
We proceed by defining the graph morphism \(h \colon R \to H\), which is necessary if the node (or edge) resides in \(R\) without \(K\) via \(b\). In the first scenario, it is evident that the node (or edge) is mapped through \texttt{Plus}, using the \texttt{Inr} constructor. In other instances, we utilize the inverse of \(b\) and proceed along \(d\). In all cases, the node (or edge) invariably exists in the left part of \texttt{Plus}, indicated by the \texttt{Inl} constructor.
\Snippet{gluing-morph-h}
The morphism \(c \colon D \to H\) is an injective morphism, which injects the graph \(D\) into \(H\) using the \texttt{Inl} constructor.
\Snippet{gluing-morph-c}
In some cases, it is required to add additional type annotations. If these are skipped, Isabelle tries to infer the corresponding type, which might lead to a surprising outcome. The formalization of the deletion construction is described in~\cite{Soeldner-Plump22a}.

\subsection{Pushouts}
With these components in place, we demonstrate how they define the abstract concept of pushouts.
\begin{defi}[Pushout]\label{def:pushout}
  Given graph morphisms $b \colon A \to B$ and $c \colon A \to C$, a graph $D$ together with
  graph morphisms $f \colon B \to D$ and $g \colon C \to D$ is a \emph{pushout} of
  $A \to B$ and $A \to C$ if the following holds (see Fig.~\ref{fig:pushout}):
  \begin{enumerate}
      \item Commutativity: $f \circ b = g \circ c$, and
      \item Universal property: For all graph morphisms $p \colon B \to D'$ and $t \colon C \to D'$
        such that $p \circ b = t \circ c$, there is a unique morphism $u \colon D \to D'$
        such that $u \circ f = p$ and $u \circ g = t$.
      \end{enumerate}
      We call \(D\) the \emph{pushout object} and \(C\) the \emph{pushout complement}. \qed
\end{defi}

We are not able to express the unique existence, used in the \emph{universal property} directly. The built-in binder for expressing this property presents a challenge. Since total functions are defined over the entire universe and our graphs may only cover a subset of the node (edge) identifiers, the built-in binder would be too strong. Therefore, our quantification needs to focus solely on the nodes (edges) pertinent to the required objects.
As a result, we introduce the abbreviation \texttt{Ex1M}, which will quantify only over the required subset of nodes and edges used in the corresponding graphs.
\Snippet{pushout-ex1m}

We are now prepared to formalize pushouts using a locale, addressing morphisms \(A \to B\), \(A \to C\), \(B \to D\), and \(C \to D\) that fulfil both commutativity and the universal property. The formalisation is given below:
\Snippet{pushout-diagram}

\begin{figure}
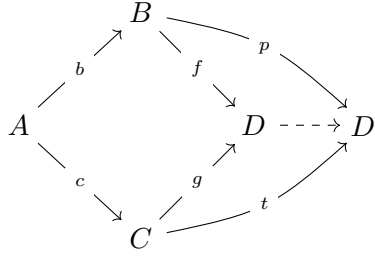

    \tikzfig{pushout}
    \caption{Pushout \(A \rightarrow B \rightarrow D \leftarrow C \leftarrow A\)}
    \label{fig:pushout}
\end{figure}
The locale incorporates the four morphisms as depicted in Fig,~\ref{fig:pushout}. We express commutativity separately for nodes and edges through \texttt{node\_commutativity} and \texttt{edge\_commutativity}, respectively. Implementing the universal property proves more complex compared to using partial function. When using partial functions, we used equality between the two compositions while without, we explicitly have to quantify over the domain. We address the limitation of Isabelle's locale mechanism, which restricts quantification over fresh types used for the graph \texttt{D'}. 

Our solution involves utilizing the \texttt{to\_ngraph} and \texttt{from\_ngraph} infrastructure to convert our morphism into natural numbers, a setup previously discussed.
The need for this workaround arises from a limitation of simple type theories, which restricts higher-order logic (HOL) systems from allowing explicit quantification over polymorphic type variables~\cite{harrison-wiedijk12a}. Consequently, certain concepts cannot be expressed directly, leading to details that may be hard to read and differ from the presentation in standard mathematical textbooks.

Furthermore, our reliance on total functions means we cannot verify the equality of morphisms directly. Instead, we must quantify over the specific areas they are defined in.

To respond to the limitations of locales, we have introduced an additional lemma within the locale's context. This lemma generalizes the universal property by omitting the \texttt{ngraph} topic. Consequently, we can depend on this lemma for most of our work, eliminating the need to consider the conversion between identifiers in \texttt{graph} and \texttt{ngraph}.

\Snippet{pushout-univ}
The proof relies on the correctness property of the conversion functions.

\begin{lem}[Correctness of \texttt{to\_ngraph}]
    Let \(G\) be a graph, then \texttt{to\_ngraph G} is a natural graph.
\end{lem}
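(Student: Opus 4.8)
The plan is to unfold the \texttt{graph} locale predicate applied to \texttt{to\_ngraph G} into its four constituent axioms---\texttt{finite\_nodes}, \texttt{finite\_edges}, \texttt{source\_integrity}, and \texttt{target\_integrity}---and discharge each one from the corresponding assumption for \(G\) together with the basic facts about the conversion functions \texttt{to\_nat} and \texttt{from\_nat}. The single property that does all the real work is the left-inverse law \texttt{from\_nat (to\_nat x) = x} supplied by the \texttt{countable} typeclass, which lets us strip off and restore the natural-number encoding at will; \texttt{to\_nat} being injective will also be convenient, though for this lemma the inverse law alone suffices.

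First I would handle the two finiteness goals. By definition, the node and edge sets of \texttt{to\_ngraph G} are \texttt{to\_nat ` V\textsubscript{G}} and \texttt{to\_nat ` E\textsubscript{G}}. Since \texttt{graph G} gives us \texttt{finite V\textsubscript{G}} and \texttt{finite E\textsubscript{G}}, these are images of finite sets and hence finite by \texttt{finite\_imageI}. This step is essentially immediate and needs no facts about \texttt{to\_nat} at all.

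The substantive goals are the two integrity conditions, which are symmetric, so it is enough to describe the source case. I would fix an edge \texttt{e} of \texttt{to\_ngraph G}, i.e.\ \texttt{e \(\in\) to\_nat ` E\textsubscript{G}}, and unpack this membership to obtain some \texttt{e'} \(\in\) \texttt{E\textsubscript{G}} with \texttt{e = to\_nat e'}. The source in \texttt{to\_ngraph G} is \texttt{to\_nat (s\textsubscript{G} (from\_nat e))}; rewriting with \texttt{from\_nat (to\_nat e') = e'} collapses this to \texttt{to\_nat (s\textsubscript{G} e')}. Now \texttt{source\_integrity} for \(G\) yields \texttt{s\textsubscript{G} e' \(\in\) V\textsubscript{G}}, whence \texttt{to\_nat (s\textsubscript{G} e') \(\in\) to\_nat ` V\textsubscript{G}} by \texttt{imageI}, which is exactly the node set of \texttt{to\_ngraph G}. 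The target case proceeds identically using \texttt{target\_integrity}.

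The main obstacle here is not mathematical depth but bookkeeping with the conversions: as the surrounding text stresses, \texttt{to\_nat} is defined via Hilbert choice (\texttt{SOME}), so the inverse law only holds when \texttt{to\_nat} and \texttt{from\_nat} are instantiated at the \emph{same} type. In a nested application like \texttt{to\_nat (s\textsubscript{G} (from\_nat e))} the intermediate \texttt{from\_nat} produces an element of \(G\)'s original node/edge type, and I would need to supply type annotations to keep these instances aligned so that \texttt{from\_nat (to\_nat e') = e'} actually fires under the simplifier. Once the encodings are pinned down correctly, each axiom follows by \texttt{simp} augmented with the \texttt{to\_ngraph} definition, the inverse law, and the relevant \(G\)-axiom; the proof is short, and this lemma then serves as the building block for the \(\Leftarrow\) direction of \texttt{graph\_ngraph\_corres\_iff}.
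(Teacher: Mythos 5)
Your proposal is correct and follows essentially the same route as the paper, which discharges the lemma by unfolding the \texttt{to\_ngraph} definition and ``leveraging the injectivity of the corresponding functions and Isabelle's powerful automation capabilities''---your explicit steps (finiteness via images of finite sets, integrity via the left-inverse law \texttt{from\_nat (to\_nat x) = x}, which is exactly what injectivity of \texttt{to\_nat} buys) are just the manual expansion of what that automation performs. Your remark about pinning down type annotations so the inverse law fires at the right type instances also matches the paper's own caveat about \texttt{SOME}-based conversions.
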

The correctness property is expressed in Isabelle/HOL as an if and only if.
\Snippet{graph_iff_ngraph}
We express a similar property for morphisms and the lifted morphisms over to natural graphs.
\begin{lem}[Correctness of \texttt{to\_nmorph}]
    Let \(m \colon G \to H\) be a graph morphism, then \texttt{to\_nmorph m} is a morphism \texttt{to\_ngraph G} to \texttt{to\_ngraph H}.
\end{lem}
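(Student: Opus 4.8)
\subsection*{Proof proposal for Correctness of \texttt{to\_nmorph}}

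The plan is to prove the single implication directly by instantiating the \texttt{morphism} locale for \texttt{to\_nmorph m} between \texttt{to\_ngraph G} and \texttt{to\_ngraph H}, mirroring the structure of the \texttt{wf\_morph\_comp} proof shown earlier. Opening with \texttt{intro\_locales} produces three obligations: \texttt{graph (to\_ngraph G)}, \texttt{graph (to\_ngraph H)}, and the bundled \texttt{morphism\_axioms}. The two graph obligations are discharged at once: the hypothesis that $m \colon G \to H$ is a morphism already carries \texttt{graph G} and \texttt{graph H} (these are the inherited \texttt{G} and \texttt{H} assumptions of the \texttt{morphism} locale), so \texttt{graph (to\_ngraph G)} and \texttt{graph (to\_ngraph H)} follow immediately from the previously established \texttt{graph\_ngraph\_corres\_iff}.

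For \texttt{morphism\_axioms} I would proceed by \texttt{standard} and reduce each of the six goals to the corresponding axiom of $m$, the only rewriting ingredient being the round-trip identity \texttt{from\_nat (to\_nat x) = x}. The workhorse observation is that every node of \texttt{to\_ngraph G}, i.e.\ every element of the \texttt{to\_nat}-image of $V_G$, has the form \texttt{to\_nat}\,$v$ with $v \in V_G$, and likewise for edges. After unfolding the definitions of \texttt{to\_ngraph} and \texttt{to\_nmorph}, the node- and edge-range goals become membership of \texttt{to\_nat}\,$(m_V\,v)$ in the \texttt{to\_nat}-image of $V_H$, which is immediate from \texttt{morph\_node\_range} (resp.\ \texttt{morph\_edge\_range}) of $m$. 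The source-preservation goal, after collapsing the inner \texttt{from\_nat}/\texttt{to\_nat} round-trips on both sides, reduces to $\texttt{to\_nat}(m_V(s_G\,e)) = \texttt{to\_nat}(s_H(m_E\,e))$, which is exactly \texttt{source\_preserve} of $m$ with \texttt{to\_nat} applied congruentially to both sides; target-, label-, and mark-preservation follow by the same pattern, each closing by \texttt{simp} with the unfolded definitions and the relevant axiom of $m$ supplied via \texttt{OF}.

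The hard part will be managing the type instances of \texttt{to\_nat} and \texttt{from\_nat}. As the excerpt stresses, \texttt{to\_nat} is defined through Hilbert choice and is instantiated separately at the node and at the edge type; in the source- and target-preservation goals both instances occur simultaneously inside a nested composition, and the round-trip lemma \texttt{from\_nat (to\_nat x) = x} will only fire if Isabelle has inferred the intended instance at each occurrence. I therefore expect to need explicit type annotations (or an explicitly typed variant of the round-trip fact) to force the node-typed and edge-typed rewrites to match, rather than trusting the simplifier's default inference. Once these annotations are in place the rewriting is routine, and no injectivity of \texttt{to\_nat} beyond the round-trip identity is required, since applying \texttt{to\_nat} to an equality preserves it trivially.
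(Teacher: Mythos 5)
Your proposal is correct and takes essentially the same route as the paper, which discharges this lemma by unfolding the definitions of \texttt{to\_ngraph} and \texttt{to\_nmorph} and closing the locale obligations by automation with the round-trip fact \texttt{from\_nat (to\_nat x) = x} (the consequence of the injectivity the paper invokes), including the type-instance care you rightly anticipate. The only point worth adding is that the paper's formal statement \texttt{morph\_eq\_nmorph\_iff} is a biconditional, and it is in the converse direction---cancelling \texttt{to\_nat} on image membership---that injectivity is needed beyond the round-trip identity; for the direction stated here your argument is complete.
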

The corresponding formalisation is given by:
\Snippet{morph_iff_nmorph}
We easily discharge both lemmas by unfolding the definitions of \texttt{to\_ngraph} and \texttt{to\_nmorph}, leveraging the injectivity of the corresponding functions and Isabelle's powerful automation capabilities.

An important property is that pushouts are unique up to isomorphism, which we have initially formalised in~\cite{Soeldner-Plump22a} and in this paper enhanced using the previously described techniques.

\begin{thm}[Uniqueness of pushouts \cite{Ehrig-Ehrig-Prange-Taentzer06a}]\label{thm:po-uniqueness}
  Let $b \colon A \to B$ and $c \colon A \to C$ together with $D$ induce a pushout as depicted in
  Fig.~\ref{fig:pushout}. A graph $D'$ together with morphisms $p \colon B \to D'$ and $t \colon C \to H$ is a pushout
  of $b$ and $c$ if and only if there is an isomorphism $u \colon D \to D'$ such that $u \circ f = p$
  and $u \circ g = t$.\qed
\end{thm}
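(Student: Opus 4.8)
The plan is to prove the two implications of the biconditional separately, in both cases reducing everything to the universal property in its domain-restricted form; for this I would use the locale-internal lemma \texttt{universal\_property\_exist\_gen} so as to avoid the \texttt{to\_ngraph}/\texttt{from\_ngraph} conversions entirely. Throughout, I would keep in mind that because morphisms are total functions, every equation of the shape $x \circ y = z$ must be read and manipulated pointwise on the relevant node and edge sets, and that the uniqueness clause of \texttt{Ex1M} only asserts agreement of two mediating morphisms on those sets.

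For the direction ``$(D',p,t)$ a pushout $\Rightarrow$ isomorphism'', I would first record the commutativity equations $p \circ b = t \circ c$ and $f \circ b = g \circ c$ on $V_A$ and $E_A$. Applying the universal property of the given pushout $(D,f,g)$ to the cocone $(p,t)$ yields a mediating morphism $u \colon D \to D'$ with $u \circ f = p$ and $u \circ g = t$; symmetrically, the universal property of $(D',p,t)$ applied to $(f,g)$ yields $u' \colon D' \to D$ with $u' \circ p = f$ and $u' \circ t = g$. Then $u' \circ u$ and the identity both mediate the cocone $(f,g)$ through $D$ — indeed $(u' \circ u) \circ f = u' \circ p = f$ and $(u' \circ u) \circ g = u' \circ t = g$ on the respective carrier sets, using \texttt{wf\_morph\_comp} to see that $u' \circ u$ is a morphism — so the uniqueness part of \texttt{Ex1M} forces them to agree on $V_D$ and $E_D$. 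The symmetric argument gives $u \circ u' = \mathrm{id}$ on $V_{D'}$ and $E_{D'}$. Feeding these four pointwise identities into \texttt{comp\_id\_bij} then delivers \texttt{bijective\_morphism}~$D$~$D'$~$u$, i.e.\ $u$ is an isomorphism with $u \circ f = p$ and $u \circ g = t$ as required.

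For the converse ``isomorphism $\Rightarrow$ $(D',p,t)$ a pushout'', I would obtain the inverse morphism $u^{-1} \colon D' \to D$ from bijectivity (the invertibility of a bijective morphism, as used elsewhere in the development), satisfying $u \circ u^{-1} = \mathrm{id}$ and $u^{-1} \circ u = \mathrm{id}$ pointwise. Commutativity of the new square then follows by rewriting: $p \circ b = u \circ f \circ b = u \circ g \circ c = t \circ c$. For the universal property, given any cocone $(p',t')$ into some $D''$ with $p' \circ b = t' \circ c$, I would invoke the universal property of the original pushout $(D,f,g)$ to get a unique $w \colon D \to D''$ with $w \circ f = p'$ and $w \circ g = t'$, and then take the mediating morphism to be $w \circ u^{-1} \colon D' \to D''$. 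A short calculation shows $(w \circ u^{-1}) \circ p = w \circ f = p'$ and likewise for $t$; uniqueness transfers back through $u$ by composing any competing mediator with $u$ and appealing to the uniqueness of $w$. Discharging the \texttt{pushout\_diagram} axioms for $(D',p,t)$ then amounts to re-packaging these facts, again via \texttt{universal\_property\_exist\_gen}.

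The main obstacle I anticipate is not the categorical content, which is routine, but the bookkeeping forced by totality: each composed morphism must be shown well-defined with \texttt{wf\_morph\_comp} before it can be used, and every ``equality of morphisms'' must be discharged as two bounded quantifications over the correct carrier sets rather than as a single function equation. Matching the exact subsets named in the \texttt{Ex1M} uniqueness clause to those demanded by \texttt{comp\_id\_bij}, and threading the inverse $u^{-1}$ through the \texttt{ngraph} conversions implicit in the locale's statement of the universal property, is where I expect the care to be needed.
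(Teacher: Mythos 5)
Your proposal is correct and takes essentially the same route as the paper's formalisation, which follows the classical mediating-morphism argument of Ehrig et al.: both directions are reduced to the (domain-restricted) universal property, with the two mediators shown mutually inverse via the \texttt{Ex1M} uniqueness clause, and the isomorphism assembled from exactly the lemmas the paper provides for this purpose (\texttt{universal\_property\_exist\_gen}, \texttt{wf\_morph\_comp}, \texttt{comp\_id\_bij}, and the existence of inverses for bijective morphisms). Your anticipated difficulties --- pointwise equalities over carrier sets and the \texttt{ngraph} conversions hidden in the locale's universal property --- are precisely the technical overhead the paper's development handles the same way.
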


We formalise this property by entering the locale \texttt{pushout\_diagram} context, which incorporates all relevant locale assumptions for use with their specified names. As a result, we are able to assume only the additional graph \(D'\) together with \(B \to D'\) and \(C \to D'\). 
\Snippet{pushout-uniq}

In our particular case, where we have injective pushouts (and rules), we can also state the uniqueness of the pushout complement.
\begin{thm}[Uniqueness of the pushout complement]
  Let $b \colon A \to B$ and $c \colon A \to C$ be injective morphisms, and the graph $D$ induce a pushout as depicted in
  Fig.~\ref{fig:pushout}. Let \(A \rightarrow B \rightarrow D \leftarrow C' \leftarrow A\) be another pushout with injective \(A \to C'\). Then \(C\) and \(C'\) are isomorphic.
  \qed
\end{thm}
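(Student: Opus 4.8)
The plan is to prove the stated implication by exhibiting an explicit isomorphism $u \colon C \to C'$. The guiding idea is that, although no universal property directly produces a map between the two pushout complements, the images $g(C)$ and $g'(C')$ coincide as subgraphs of $D$; since $g$ and $g'$ are injective, both $C$ and $C'$ are isomorphic to this common image, and $u$ is obtained by transporting along these inclusions.

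First I would collect the structural facts. Both squares are pushouts of injective spans, so by \texttt{pushout\_pullback\_inj\_b} (applied to $b,c$ and again to $b,c'$) each is simultaneously a pullback, and by \texttt{b\_inj\_imp\_g\_inj} the injectivity of $b$ forces $g$ and $g'$ to be injective. As pushouts they satisfy joint surjectivity, giving $V_D = f_V(V_B) \cup g_V(V_C) = f_V(V_B) \cup g'_V(V_{C'})$ (and likewise on edges); as pullbacks they satisfy the reduced chain condition via \texttt{reduced\_chain\_condition\_nodes} and its edge analogue.

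The core computation is the identity $g_V(V_C) = g'_V(V_{C'})$ (and the edge version). The reduced chain condition shows that any $d \in g_V(V_C) \cap f_V(V_B)$, written $d = g_V(y) = f_V(x)$, admits $a \in V_A$ with $b_V(a)=x$ and $c_V(a)=y$, so the overlap is exactly $f_V(b_V(V_A))$; combined with joint surjectivity this yields $g_V(V_C) = (V_D \setminus f_V(V_B)) \cup f_V(b_V(V_A))$. Since the right-hand side mentions only $f, b, A, B, D$ — data shared by both squares — the same description holds for $g'_V(V_{C'})$, so the images agree. With equality of images in hand I would define $u$ on nodes by sending $v \in V_C$ to the unique $v' \in V_{C'}$ with $g'_V(v') = g_V(v)$, formalised with the \texttt{inv\_into} operator, which is single-valued here because $g'_V$ is injective, and analogously on edges. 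From the single equation $g'_V(u_V(v)) = g_V(v)$, cancelling the injective $g'$, one reads off that $u$ preserves sources, targets and labels, is injective (a collision under $u$ forces one under $g$), and is surjective (each $v' \in V_{C'}$ has $g'_V(v') \in g_V(V_C)$ and is therefore hit). This establishes \texttt{bijective\_morphism C C' u}; alternatively I would build the symmetric map $C' \to C$ and discharge bijectivity through \texttt{comp\_id\_bij}.

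The main obstacle, I expect, is the image-equality step together with the well-definedness of $u$: formalising the interplay of joint surjectivity and the reduced chain condition so as to characterise $g(C)$ independently of $C$, and then verifying the morphism axioms and bijectivity for the \texttt{inv\_into}-based definition. This involves delicate bookkeeping with set images, injectivity-on-a-set, and the existential witnesses delivered by the reduced chain condition, whereas the composition and decomposition lemmas for pushouts and pullbacks play only a supporting role, for instance in transporting the reduced chain condition to the $C'$ square.
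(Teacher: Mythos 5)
Your proof is correct, and it takes a genuinely different route from the paper's. The paper establishes this theorem as the first phase of the proof of Theorem~\ref{thm:uniqueness-dd}, following Lack and Sobocinski: it builds the commutative cube of Fig.~\ref{fig:commutative-cube}, interpreting the \texttt{pullback\_construction} locale on the two inclusions \(g \colon C \to D\) and \(g' \colon C' \to D\) to obtain a mediating object \(U\); it then shows the remaining faces are pullbacks (Lemma~\ref{lem:special-pb} and the pullback composition/decomposition of Lemma~\ref{lem:comp-decomp}), promotes the top and back-right faces to pushouts via the pushout characterization (Theorem~\ref{thm:pushout-characterization}), and extracts the isomorphism as \(u = k \circ l^{-1}\), where the projections \(l \colon U \to C\) and \(k \colon U \to C'\) are bijections because pushouts preserve bijections (Lemma~\ref{lem:preservation-inj-surj}). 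You dispense with the cube and the mediating object entirely: you apply the same two ingredients of the characterization --- joint surjectivity and the reduced chain-condition, both available for each square because injective pushouts are pullbacks (Lemma~\ref{lem:injective-po-is-pb}) --- directly to the two given squares, derive the image identity \(g(C) = (D \setminus f(B)) \cup f(b(A)) = g'(C')\), and define the isomorphism as \((g')^{-1} \circ g\) via \texttt{inv\_into}; cancelling the injective \(g'\) then yields the morphism axioms, injectivity, and surjectivity. (One small correction: no composition/decomposition lemma is needed to ``transport'' the reduced chain-condition to the \(C'\) square --- that square is itself an injective pushout, hence a pullback, hence satisfies the condition outright.) Your route is more elementary and explicit, and it even yields the extra compatibility \(g' \circ u = g\); in fact it constructs the same map as the paper, since the pullback object \(U\) is exactly the relation \(\{(x,y) \mid g(x) = g'(y)\}\), i.e.\ the graph of your bijection. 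What the paper's cube buys in exchange is reuse and alignment: the faces it establishes and the characterization machinery feed directly into the second phase of Theorem~\ref{thm:uniqueness-dd} (the isomorphism \(H \cong H'\)) and into the locale infrastructure of the formalisation, and the argument stays parallel to the adhesive-category proof it adapts.
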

We formalise this theorem inside the context of the \texttt{pushout\_diagram} locale. Since the local, in general, relies on morphisms, we need to assume, \(A \to B\) and \(A \to C\) are injective. From our assumption, that \(C' \to D\) is a valid morphism, we could conclude that \(C'\) is a valid graph. Here, we explicitly added the assumption for clarity. 
\Snippet{pushout-comp-uniq}
\noindent We discuss the proof in Section~\ref{sec:uniqueness-direct-derivation}.

\subsection{Direct Derivations}
The transformation of graphs by rules gives rise to \emph{direct derivations}.
The so-called dangling condition is a crucial requirement that ensures the resulting graph remains well-formed after applying a rule. It prevents the creation of \emph{dangling} edges, which are edges that would be left pointing to non-existent nodes after the deletion of nodes. In essence, the dangling condition guarantees that when a node is deleted, all edges connected to it must either be explicitly deleted by the rule or be preserved through the interface graph. We describe the formalisation in~\cite{Soeldner-Plump22a}.

\begin{defi}[Direct derivation]\label{def:direct-derivation}
  Let \(G\) and \(H\) be graphs, \(r =\langle L \leftarrow K \rightarrow R \rangle\) be a rule, and
  \(g \colon L \to G\) an injective morphism satisfying the dangling condition.
  Then \(G\) \emph{directly derives} \(H\) by \(r\) and \(g\), denoted by \(G \Rightarrow_{r,g} H\),
  as depicted in the double pushout diagram in Fig.~\ref{fig:double-pushout}.\qed
\end{defi}
Note that the injectivity of the matching morphism \(g \colon L \to G\) leads to a DPO approach
that is more expressive than in the case of arbitrary matches \cite{Habel-Mueller-Plump98a}.
In our earlier work \cite{Soeldner-Plump22a}, we used the \texttt{direct\_derivation} locale to represent the operational 
view using gluing and deletion. Instead, here we use the categorical definition relying on pushouts.
\begin{figure}
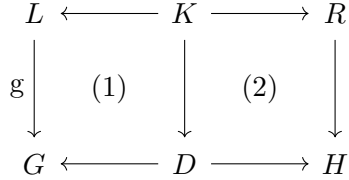

    \tikzfig{double-pushout}
    \caption{Double-pushout diagram}\label{fig:double-pushout}
\end{figure}

Our formalisation is close to the Def.~\ref{def:direct-derivation}. We inherit the \texttt{rule} locale, together with an injective morphism \(g\) and the two pushouts \((1)\) and \((2)\). Since we introduced the \texttt{pre\_rule} record, we use the accessor functions \texttt{lhs}, \texttt{rhs}, and \texttt{interf} to extract the corresponding objects.
\Snippet{direct-derivation}
The operational definition is available within the \texttt{direct\_derivation\_construction} locale.
\Snippet{direct-derivation-construction}
Here, we follow the same pattern but instead of the \texttt{pushout\_diagram} locale, we rely on \texttt{deletion} and \texttt{gluing}. Inside the \texttt{for} statement, we introduce a fresh \texttt{pre\_graph} record, named \(H\), which represents the final graph, depicted in Fig.~\ref{fig:double-pushout}. By using the \texttt{assumes} statement, we require that the pushout object from \((2)\) is equal to \(H\) (\texttt{H = g.H}). This allows to use the bound name \(H\), instead of \texttt{g.H}, inside the proof.

\begin{figure}
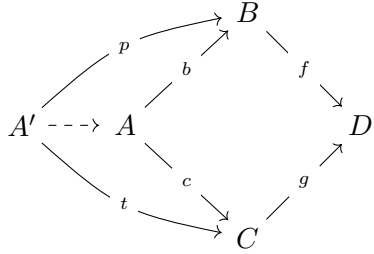

    \tikzfig{pullback}
    \caption{Pullback \(D \leftarrow C \leftarrow A \rightarrow B \rightarrow D\)}
    \label{fig:pullback}
\end{figure}

\subsection{Pullbacks}
A pullback is dual to the concept of pushout and Pullbacks generally represent the intersection of objects over a common object.
\begin{defi}[Pullback]\label{def:pullback}
  Given graph morphism \(f \colon B \to D\) and \(g \colon C \to D\), a graph \(A\) together with
  graph morphisms \(b \colon A \to B\) and \(c \colon A \to C\) is a \emph{pullback} of
  \(C \rightarrow D \leftarrow B\) if the following holds (see Fig.~\ref{fig:pullback}):
  \begin{enumerate}
  \item Commutativity: \(f \circ b = g \circ c\), and
  \item Universal property: For all graph morphisms \(p \colon A' \to B\) and
    \(t \colon A' \to C\) such that \(f \circ p = g \circ t\), there
    is a unique morphism \(u \colon H \to A\) such that \(b \circ u = p\)
    and \(c \circ u = t\).\qed
  \end{enumerate}
\end{defi}

The formalisation closely follows the \texttt{pushout\_diagram} locale, and all design considerations also apply here.
\Snippet{pullback}
Many of our proofs use the set-based construction of pullbacks, which is given by the following definition.

\begin{defi}[Pullback construction \cite{Ehrig-Ehrig-Prange-Taentzer06a}]\label{def:pullback-construction}
  Let \(f \colon B \to D\) and \(g \colon C \to D\) be graph morphisms. Then the following defines a graph \(A\)
  (see Fig.~\ref{fig:pullback}), the pullback object of \(f\) \linebreak[5]and \(g\):
  \begin{enumerate}
  \item \(A = \{\langle x, y\rangle \in B \times C\ \vert\ f(x) = g(y)\}\) for nodes and edges, respectively
  \item \(s_A(\langle x, y \rangle) = \langle s_B(x), s_c(y)\rangle\) for \(\langle x, y \rangle \in E_B \times E_C\)
  \item \(t_A(\langle x, y \rangle) = \langle t_B(x), t_c(y)\rangle\) for \(\langle x, y \rangle \in E_B \times E_C\)
  \item \(l_A(\langle x, y\rangle) = l_B(x)\) for \(\langle x, y\rangle \in V_B \times V_C\)
  \item \(m_A(\langle x, y\rangle) = m_B(x)\) for \(\langle x, y\rangle \in E_B \times E_C\)
  \item \(b \colon A \to B\) and \(c \colon A \to C\) are defined by \(b(\langle x, y\rangle) = x\) and
    \(c(\langle x, y\rangle) = y\)\qed
  \end{enumerate}
\end{defi}
We formalise the pullback construction using the \texttt{pullback\_construction} locale,
assuming the graph morphisms \(f\colon B \to D\) and \(g\colon C \to D\). 
\Snippet{pullback-construction}

To construct the pullback object \(C\), we define all the required \texttt{pre\_graph} components 
individually, starting with the node  and edge sets as follows:
\Snippet{pullback-construction-comp}
The corresponding set contains the pairs \((x,y)\) where \(x \in B\) and \(y \in C\) such that they project via \(f\) and \(g\) to the same element in \(D\).
Both source and target functions (\texttt{s} and \texttt{t}), use the corresponding function on each element individually. For the labelling functions, it does not matter and we picked to use the one from \(B\).
\Snippet{pullback-construction-comp2}
With all components available, we can now define the final \texttt{pre\_graph} object:
\Snippet{pullback-construction-obj}
In subsequent code, we used the \texttt{sublocale} keyword to show that our definition of the pullback object \(A\) is a valid graph. We continue by defining the two missing morphisms we need to show the pullback diagram locale.
First, the morphism \(A \to B\) is defined by using the first element of the tuple. We archive this by using the built-in function \texttt{fst}. Note, we explicitly added type annotation to both \texttt{pre\_morph} objects to prevent later problems if the type is inferred to strict. 
\Snippet{pullback-morphism-b-def}
For \(A \to C\), we use the other projection, \texttt{snd}, to extract the second element of the tuple.
\Snippet{pullback-morphism-c-def}

Finally, we prove that both, \texttt{b} and \texttt{c} definitions, are valid morphisms by using the \texttt{sublocale} mechanism. The proof of \(b \colon A \to B\) follows by unfolding the definition of \(A\) and \(b\) while \(c \colon A \to D\) also needs to account for the preservation of labels of \(f\) and \(g\).

The next lemma shows that this construction leads to a valid pullback diagram.
\begin{lem}[Correctness of pullback construction]
  Let \(f \colon B \to D\) and \(g \colon C \to D\) be graph morphisms and let graph \(A\) and graph morphisms $b$ and $c$ be defined as in Def.~\ref{def:pullback-construction}. Then the square in Fig.~\ref{fig:pullback} is a pullback diagram.
\end{lem}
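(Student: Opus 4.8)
The plan is to instantiate the \texttt{pullback\_diagram} locale for the constructed data $A$, $b=\texttt{fst}$, $c=\texttt{snd}$ and the given $f,g$. This locale obligation decomposes into three parts: that $b,c,f,g$ are morphisms, that the square commutes on nodes and edges, and that the universal property holds. The first part is already discharged by the \texttt{sublocale} proofs establishing $b\colon A\to B$ and $c\colon A\to C$ (and $f,g$ are assumed), so the real content lies in commutativity and the universal property. Commutativity is immediate from the definition of $A$: for any $\langle x,y\rangle\in V_A$ we have $[f\circ_\rightarrow b]_V\langle x,y\rangle = f_V(x)$ and $[g\circ_\rightarrow c]_V\langle x,y\rangle = g_V(y)$, and the defining predicate of $A$ gives $f_V(x)=g_V(y)$; the edge case is identical. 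So both \texttt{node\_commutativity} and \texttt{edge\_commutativity} follow by unfolding $A$, $b$, $c$ and applying the membership condition.

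For the universal property I would supply the obvious witness. Given a natural graph $A'$ with morphisms $b'\colon A'\to B$, $c'\colon A'\to C$ satisfying $f\circ_\rightarrow b' = g\circ_\rightarrow c'$ on $V_{A'}$ and $E_{A'}$, define $u$ by $u_V(z)=\langle b'_V(z),\,c'_V(z)\rangle$ and $u_E(z)=\langle b'_E(z),\,c'_E(z)\rangle$. The key well-definedness check is that $u$ lands in $A$: for $z\in V_{A'}$ we have $b'_V(z)\in V_B$, $c'_V(z)\in V_C$, and the commutativity hypothesis $f_V(b'_V(z))=g_V(c'_V(z))$ is exactly the predicate defining $V_A$, so $u_V(z)\in V_A$; the edge case is analogous. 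That $u$ is a morphism then reduces to source/target/label preservation, each of which follows pointwise from the corresponding preservation property of $b'$ (and $c'$) together with the component-wise definitions of $s_A,t_A,l_A,m_A$ in Def.~\ref{def:pullback-construction}; for example $s_A(u_E(e)) = \langle s_B(b'_E(e)),\,s_C(c'_E(e))\rangle = \langle b'_V(s_{A'}(e)),\,c'_V(s_{A'}(e))\rangle = u_V(s_{A'}(e))$. The two triangle equations $b\circ_\rightarrow u = b'$ and $c\circ_\rightarrow u = c'$ hold definitionally on $V_{A'}$ and $E_{A'}$ because $\texttt{fst}\,\langle b'(z),c'(z)\rangle = b'(z)$ and $\texttt{snd}\,\langle b'(z),c'(z)\rangle = c'(z)$.

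Uniqueness, as demanded by \texttt{Ex1M}, only asks that any competing $u'$ agree with $u$ on the nodes and edges of $A'$. For $z\in V_{A'}$, since $u'$ is a morphism we have $u'_V(z)\in V_A$, hence $u'_V(z)=\langle x,y\rangle$ is a pair; the assumptions $[b\circ_\rightarrow u']_V z = b'_V(z)$ and $[c\circ_\rightarrow u']_V z = c'_V(z)$ force $x=b'_V(z)$ and $y=c'_V(z)$, so by surjective pairing $u'_V(z)=u_V(z)$, and likewise on edges. I expect the only friction to be Isabelle bookkeeping rather than mathematics: massaging the goal into the \texttt{Ex1M} shape, handling the conversion of $A'$ as an \texttt{ngraph} (so that $V_{A'},E_{A'}$ are sets of naturals), and repeatedly rewriting with the definitions of $A$, $b$, $c$ and \texttt{morph\_comp\_def} so the simplifier can close the componentwise equalities. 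The main conceptual obstacle is simply arranging the well-definedness-into-$A$ step cleanly, since it is the one place where the commutativity hypothesis on $A'$ is genuinely consumed; everything else is the standard ``a cone into the intersection is a single pair-valued map'' argument.
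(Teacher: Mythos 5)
Your proposal is correct and matches the paper's proof, which simply instantiates the \texttt{pullback\_diagram} locale via \texttt{sublocale} and remarks that the obligations ``basically follow from our construction''; what you spell out (commutativity from the membership predicate of $A$, the pairing map as the universal witness, well-definedness into $A$ from the cone's commutativity, and uniqueness by surjective pairing) is exactly the standard discharge of those locale obligations, including the \texttt{ngraph} bookkeeping the paper's locale forces.
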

We use the \texttt{sublocale} command, instead of \texttt{interpretation}, to make these facts persistent in the current
context via the \texttt{pb} identifier.
\Snippet{pullback-construction-correctness}
\noindent The proof basically follows from our construction. Similar to pushouts, pullbacks are unique up to isomorphism.
\begin{thm}[Uniqueness of pullbacks]\label{thm:uniqueness-pb}
  Let $f \colon B \to D$ and $g \colon C \to D$ together with $A$ induce a pullback as depicted in
  Fig.~\ref{fig:pullback}. A graph $A'$ together with morphisms $p \colon A' \to B$ and $t \colon A' \to C$ is a pullback
  of $f$ and $g$ if and only if there is an isomorphism $u \colon A' \to A$ such that $b \circ u = p$
  and $c \circ u = t$.
\end{thm}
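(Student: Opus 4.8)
The plan is to prove the biconditional by establishing the two implications separately, working throughout inside the \texttt{pullback\_diagram} context so that $A$, $b$, $c$, $f$, $g$ and the universal property are all available by name. The whole argument is the standard "uniqueness of a limiting cone" and is formally dual to the already-established uniqueness of pushouts (Theorem~\ref{thm:po-uniqueness}); I would deliberately mirror that proof. As in every morphism argument here, "equality of morphisms" must be read pointwise over the relevant node and edge sets, and I would use the generalised form of the pullback universal property (the pullback analogue of the \texttt{universal\_property\_exist\_gen} lemma used in the Church--Rosser development) so as to obtain mediating morphisms directly, without threading the \texttt{to\_ngraph}/\texttt{to\_nmorph} conversions by hand.

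For the direction from right to left, I assume a bijective $u \colon A' \to A$ with $b \circ u = b'$ and $c \circ u = t'$ (i.e.\ $=c'$) on $V_{A'}$ and $E_{A'}$, and I must exhibit \texttt{pullback\_diagram}~$A'\ B\ C\ D\ b'\ c'\ f\ g$. The morphisms $b'$, $c'$, $f$, $g$ are given; commutativity at $A'$ follows by rewriting: for $x \in V_{A'}$, $f(b'(x)) = f(b(u(x))) = g(c(u(x))) = g(c'(x))$, using commutativity of the pullback $A$ at $u(x) \in V_A$, and likewise for edges. For the universal property, given any competing cone $(A'', p'', t'')$ with $f \circ p'' = g \circ t''$, I first apply the (generalised) universal property of $A$ to obtain the unique $w \colon A'' \to A$ with $b \circ w = p''$ and $c \circ w = t''$, then set the mediator into $A'$ to be $u^{-1} \circ w$, where $u^{-1} \colon A \to A'$ is the inverse morphism extracted from the bijection $u$ (via the \texttt{ex\_inv}-style fact used elsewhere). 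A short computation, using $b' = b \circ u$ on $V_{A'}$ and $u(u^{-1}(z)) = z$ for $z \in V_A$, gives $b' \circ (u^{-1}\circ w) = p''$ and $c' \circ (u^{-1}\circ w) = t''$; uniqueness transfers because any other mediator $w'$ into $A'$ yields a mediator $u \circ w'$ into $A$, forcing $u \circ w' = w$ by the \texttt{Ex1M} clause, whence $w' = u^{-1}\circ w$ by injectivity of $u$.

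For the direction from left to right, I assume $A'$ is itself a pullback of $f,g$ via $b',c'$. Because $A'$ is a pullback it commutes, so the cone $(A', b', c')$ satisfies $f \circ b' = g \circ c'$; applying the universal property of $A$ to this cone yields a unique $u \colon A' \to A$ with $b \circ u = b'$ and $c \circ u = c'$, which is exactly the witness the theorem asks for. Symmetrically, applying the universal property of $A'$ to the cone $(A, b, c)$ yields $u' \colon A \to A'$ with $b' \circ u' = b$ and $c' \circ u' = c$. It remains to check $u$ is an isomorphism: $u \circ u'$ is a morphism $A \to A$ with $b \circ (u \circ u') = b$ and $c \circ (u \circ u') = c$, so it is a mediator for the cone $(A,b,c)$ into $A$; the identity is another such mediator, so by the uniqueness part of $A$'s universal property $u \circ u' = \mathrm{id}_A$ pointwise, and symmetrically $u' \circ u = \mathrm{id}_{A'}$. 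Feeding these two pointwise-identity facts into the lemma that a morphism with a two-sided inverse is bijective (the \texttt{comp\_id\_bij} lemma) gives \texttt{bijective\_morphism}~$A'\ A\ u$, completing this direction.

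The main obstacle is not the categorical skeleton, which is routine, but the bookkeeping forced by the formalisation. Using \texttt{Ex1M} means every uniqueness step must be phrased as pointwise agreement over the finite node and edge sets, so each "two mediators coincide" argument has to be discharged by instantiating the existential-uniqueness predicate and simplifying compositions under \texttt{morph\_comp\_def}; the cleanest route is to first prove and rely on the generalised universal property so the \texttt{to\_ngraph} machinery only surfaces when re-establishing the locale's own (ngraph-phrased) universal property for $A'$ in the right-to-left direction. I expect the most delicate point to be constructing and reasoning about $u^{-1}$ in that direction and verifying the cancellation identities $u \circ u^{-1} = \mathrm{id}$ only on the legitimate domains, since totality of the underlying functions makes it tempting, but incorrect, to reason extensionally rather than over $V_{A'}$, $E_{A'}$, $V_A$ and $E_A$.
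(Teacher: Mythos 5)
Your proposal is correct and follows essentially the same route the paper takes: the paper discharges this theorem by noting it is dual to the uniqueness of pushouts (Theorem~\ref{thm:po-uniqueness}) with the proof in~\cite{Ehrig-Ehrig-Prange-Taentzer06a}, and your two-direction argument --- mediating morphisms from the universal properties, mutual inverses via the \texttt{Ex1M} uniqueness clause, and bijectivity via the two-sided-inverse lemma (\texttt{comp\_id\_bij}) --- is exactly that standard dual proof. Your attention to pointwise reasoning over $V$ and $E$ and to where the \texttt{to\_ngraph} machinery must surface matches the formalisation's actual constraints.
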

The theorem is stated as follows in Isabelle/HOL:
\Snippet{pullback-unique}

The proof is dual to the uniqueness of the pushout (cf. Theorem~\ref{thm:po-uniqueness}) and can be found in~\cite{Ehrig-Ehrig-Prange-Taentzer06a}.
\subsection{Composition and Decomposition of Pullbacks and Pushouts}
Essential properties for the forthcoming proofs in Section \ref{sec:uniqueness-direct-derivation} and Section \ref{sec:church-rosser} are the composition and decomposition of pushouts and pullbacks.
\begin{figure}
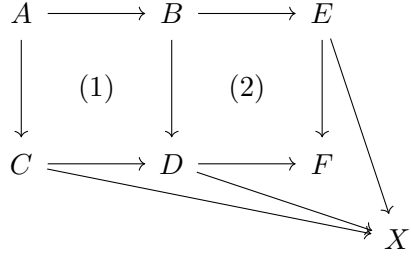

    \tikzfig{composed-commutative-diagram}
    \caption{Composite commutative diagram}\label{fig:composed-commutative-diagram}
  
\end{figure}

\begin{lem}[Pushout/Pullback composition and decomposition]\label{lem:comp-decomp}
  Given the commutative diagram in Fig.~\ref{fig:composed-commutative-diagram}, then the following statements are true:
  \begin{enumerate}[label=(\alph*)]
     \item If \((1)\) and \((2)\) are pushouts, so is \((1) + (2)\)
     \item If \((1)\) and \((1) + (2)\) are pushouts, so is \((2)\)
     \item If \((1)\) and \((2)\) are pullbacks, so is \((1) + (2)\)\label{lem:comp-decomp-3}
     \item If \((2)\) and \((1) + (2)\) are pullbacks, so is \((1)\)\qed
  \end{enumerate}
\end{lem}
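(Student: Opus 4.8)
The plan is to prove all four parts by direct diagram chases against the universal properties, working throughout with the unwrapped form of the universal property supplied by \texttt{universal\_property\_exist\_gen} (and its pullback analogue) so that I never have to reason about the \texttt{to\_ngraph}/\texttt{to\_nmorph} conversions. Writing the morphisms of Fig.~\ref{fig:composed-commutative-diagram} as $f\colon A\to B$, $g\colon A\to C$, $g'\colon B\to D$, $f'\colon C\to D$ for square $(1)$ and $e\colon B\to E$, $e'\colon D\to F$, $e''\colon E\to F$ for square $(2)$ (sharing $g'$), the composite square $(1)+(2)$ has legs $e\circ f\colon A\to E$ and $e'\circ f'\colon C\to F$.

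For the composition directions (a) and (c) the mediating morphism is built by two successive applications of the universal property. Take pullback composition (c): given $p\colon A'\to E$ and $t\colon A'\to C$ with $e''\circ p = (e'\circ f')\circ t$, I first feed $p$ and $f'\circ t$ into the pullback $(2)$ --- their compatibility $e''\circ p = e'\circ(f'\circ t)$ is exactly the hypothesis --- obtaining a unique $q\colon A'\to B$ with $e\circ q = p$ and $g'\circ q = f'\circ t$; then I feed $q$ and $t$ into the pullback $(1)$, whose compatibility $g'\circ q = f'\circ t$ I have just established, obtaining the required $u\colon A'\to A$ with $f\circ u = q$ and $g\circ u = t$, so that $(e\circ f)\circ u = p$ and $g\circ u = t$. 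Uniqueness propagates back through the two pullbacks. Part (a) is the exact dual: the mediator $u\colon F\to D'$ is obtained by first pushing $p\circ e$ and $t$ out of $(1)$ and then pushing the result together with $p$ out of $(2)$.

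For the decomposition directions (b) and (d) the mediating morphism comes from the composite square, and the retained inner square is used only for the last commutation. Take pullback decomposition (d): I assume the commutativity of square $(1)$ (this is why the lemma carries the explicit hypotheses \texttt{1cv} and \texttt{1ce}), then given $p\colon A'\to B$ and $t\colon A'\to C$ with $g'\circ p = f'\circ t$ I apply the universal property of $(1)+(2)$ to $e\circ p$ and $t$ --- compatible because $e''\circ(e\circ p)=e'\circ g'\circ p = e'\circ f'\circ t$ --- getting $u\colon A'\to A$ with $(e\circ f)\circ u = e\circ p$ and $g\circ u = t$. It remains to upgrade $(e\circ f)\circ u = e\circ p$ to $f\circ u = p$; here I invoke the \emph{uniqueness} half of pullback $(2)$: both $f\circ u$ and $p$ are morphisms $A'\to B$ whose $e$- and $g'$-projections agree ($e\circ(f\circ u)=e\circ p$ and $g'\circ(f\circ u)=f'\circ(g\circ u)=f'\circ t=g'\circ p$), so they coincide. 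Part (b) dualises this, using the uniqueness half of pushout $(1)$ to conclude $u\circ e' = q$.

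The genuinely laborious part, rather than the categorical idea, is the Isabelle bookkeeping. Because morphisms are total functions and the universal property is the \texttt{Ex1M} predicate quantifying only over the node and edge sets of the relevant graph, every equation above has to be discharged separately for nodes and edges and only on the appropriate carrier set, with each $\circ$ unfolded via \texttt{morph\_comp\_def}. Keeping the many composite morphisms and their commutativity side-conditions straight --- and remembering that the decomposition lemmas must \emph{assume} the commutativity of the square being reconstructed, since it is not forced by the data --- is where most of the effort goes; the universal-property applications themselves are routine once the compatibility equations are in place.
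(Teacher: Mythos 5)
Your proposal is correct and takes essentially the same route as the paper: composition is proved by two successive applications of the universal properties of the inner and outer squares, and decomposition by applying the universal property of the composite square and then the uniqueness half of the retained square, with the commutativity of the reconstructed square carried as an explicit hypothesis (matching the \texttt{1cv}/\texttt{1ce} and \texttt{2cv}/\texttt{2ce} assumptions in the formalisation). The only difference is presentational: the paper works out the pushout cases (a)--(b) and obtains (c)--(d) by dualisation, whereas you work out the pullback cases and dualise in the other direction.
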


\begin{proof}
The following proofs are based on~\cite{Ehrig-Ehrig-Prange-Taentzer06a}:
\begin{enumerate}[label=(\alph*)]
    \item Assume \((1)\) and \((2)\) are pushouts. By using Lemma~\ref{lem:morph-comp}, the compositions \(A \to B \to E\) and \(C \to D \to F\) are morphisms. Let us first show commutativity: \begin{align*}
        A \to B \to E \to F &= A \to B \to D \to F & \text{(by commutativity of (2))} \\
        &= A \to C \to D \to F & \text{(by commutativity of (1))}
    \end{align*}
    Finally, we show the universal property: Let \(X\) be a graph, and \(E \to X\) and \(C \to X\) be morphisms, such that \(A \to B \to E \to X = A \to C \to X\). By the universal property of \((1)\) using \(B \to E \to X\) and \(C \to X\), we obtain the unique morphism \(D \to X\).
    By the universal property of \((2)\) using \(D \to X\) and \(E \to X\), we obtain the unique morphism \(F \to X\). Now we first show that \(E \to F \to X = E \to X\) and \(C \to D \to F \to X = C \to X\). The first equation is true by the construction of \(F \to X\). The second equation follows by:
    \begin{align*}
        C \to D \to F \to X &= C \to D \to X & \text{(by construction of \(F \to X\))}\\
        &= C \to X & \text{(by construction of \(D \to X\))}
    \end{align*}
    %Let \(F \Hat{\to} X\) be an arbitrary morphism such that 
    By the universal property of \((2)\), we know \(F \to X\) is unique such that \(E \to F \to X = E \to X\) and \(D \to F \to X = D \to X\). We compose the second equation with the morphism \(C \to D\) on both sides, so we get \(C \to D \to F \to X = C \to D \to X = C \to X\) by construction of \(D \to X\). Hence \(F \to X\) is unique such that \(E \to F \to X = E \to X\) and \(C \to D \to F \to X = C \to X\).
    \item Assume \((1)\) and \((1) + (2)\) are pushouts and let \(X\) be a graph together with morphisms \(D \to X\) and \(E \to X\) such that \(B \to E \to X = B \to D \to X\). Let \(F \to X\) be the unique morphism resulting from the universal property of \((1) + (2)\) such that \(E \to F \to X = E \to X\) and \[C \to D \to F \to X = C \to X\text{.}\] But we also know that from the universal property of \((1)\), that \(D \to X\) is unique such that \[C \to D \to X = C \to X\text{.}\]
    Hence \(D \to F \to X = D \to X\) because of the uniqueness of \(D \to X\).
    \item Proof is analogous via dualisation. 
    \item Proof is analogous via dualisation. \qedhere
\end{enumerate}
\end{proof}
Our Isabelle formalisation is an adaptation of a given proof, incorporating necessary technical modifications while maintaining the essence of the original argument. The pushout composition is a standalone lemma defined in the theory \texttt{Pushout}, cf. Fig.~\ref{fig:locales}, following (a) of Lemma~\ref{lem:comp-decomp}.
\Snippet{pushout-composition}
Here, \(f \colon A \to B\), \(g \colon A \to C\), \(g' \colon B \to D\) and \(f' \colon C \to D\) together with
\(e \colon B \to E\), \(e' \colon D \to F\), and \(e'' \colon E \to F\) forms the composed pushout diagram.

For the pushout decomposition, we additionally assume that \((2)\) commutes, i.e., \(e'' \circ e = e' \circ g'\).
In Isabelle, we express this independently for the nodes and edge of \(B\).
\Snippet{pushout-decomposition}

In our formalisation, the parts (c) and (d) of the pullback are developed analogously to the pushout ones. 
\Snippet{pullback-composition}
\Snippet{pullback-decomposition}
The proof is analogous to \cite[Fact 2.27]{Ehrig-Ehrig-Prange-Taentzer06a}.
In the next section, we show that direct derivations have a unique (up to isomorphism) result by making use of the infrastructure we have employed so far.

\section{Uniqueness of Direct Derivations}\label{sec:uniqueness-direct-derivation}

The uniqueness of direct derivations is an important property when reasoning about rule applications. This section does not rely on the adhesiveness of the category of graphs, instead we base our proof on the characterisation of graph pushouts in \cite{EK79}.
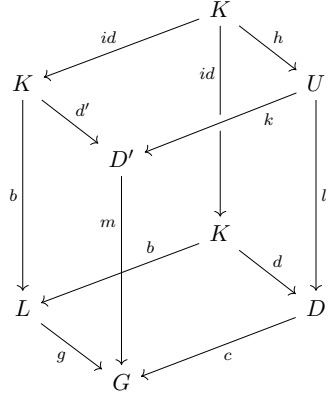
\begin{figure}
  \begin{center}
    \adjustbox{scale=0.8,center}{%
      \begin{tikzcd}
        & & |[alias=CT]|K \arrow[lld, "id", swap] \arrow[rd, "h"] \arrow[ddd, near start,"id", swap] & \\
        |[alias=C]|K \arrow[rd, "d'"] \arrow[ddd,"b", swap] & & & |[alias=U]|U \arrow[lld, crossing over, near start,"k"] \arrow[ddd, "l"] \\
        & |[alias=BP]|D' \arrow[ddd, crossing over, near start, "m", swap] & &\\
        & & |[alias=CB]|K \arrow[lld, near start,"b",swap] \arrow[rd, "d"] &\\
        |[alias=A]|L \arrow[rd, "g",swap] & & & |[alias=B]|D \arrow[lld,"c"]\\
        & |[alias=D]|G & &
    \end{tikzcd}}
  \end{center}
  \caption{Commutative cube based on direct derivation~\cite{Lack-Sobocinski04a}}
  \label{fig:commutative-cube}
\end{figure}
Before stating the theorem, we introduce additional facts, mainly about pushouts and pullbacks, which are used within
the proof of Theorem~\ref{thm:uniqueness-dd}. 

In general, pushouts along injective morphisms are also pullbacks.
\begin{lem}[Injective pushouts are pullbacks~\cite{Ehrig-Ehrig-Prange-Taentzer06a}]\label{lem:injective-po-is-pb}
  A pushout diagram as depicted in Fig.~\ref{fig:pushout} is also a pullback if \(A \to B\) and \(A \to C\) are injective.
\end{lem}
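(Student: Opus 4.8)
The plan is to work inside the \texttt{pushout\_diagram} locale and discharge the extra hypotheses that $b \colon A \to B$ and $c \colon A \to C$ are injective, then verify the defining universal property of a pullback (Def.~\ref{def:pullback}) for the very same square. Commutativity $f \circ b = g \circ c$ is already available from the pushout assumptions, so the real content is the universal property: given any graph $A'$ with morphisms $p \colon A' \to B$ and $t \colon A' \to C$ satisfying $f \circ p = g \circ t$, I must produce a \emph{unique} morphism $u \colon A' \to A$ with $b \circ u = p$ and $c \circ u = t$. As a preliminary I would record that $f$ and $g$ are themselves injective, using the lemma that a pushout along an injective morphism has an injective opposite leg (\texttt{b\_inj\_imp\_g\_inj}) together with its mirror image obtained via \texttt{flip\_diagram}.

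The heart of the argument is the \emph{reduced chain condition} for this injective pushout: for $x \in V_B$ and $y \in V_C$ with $f_V(x) = g_V(y)$ there is a node $a \in V_A$ with $b_V(a) = x$ and $c_V(a) = y$, and symmetrically for edges; such an $a$ is automatically unique because $b_V$ is injective on $V_A$. I would obtain this by passing through the concrete pushout. Since $b$ and $c$ are injective, the gluing of $B$ and $C$ along $A$ (Lemma~\ref{lemma:gluing}, taking $A$ as the interface) is a pushout over the same span, and by its explicit \texttt{Inl}/\texttt{Inr} construction it manifestly satisfies both joint surjectivity and the reduced chain condition: the only cross-component identifications in the glued object are $b_V(a) \sim c_V(a)$, so an equality $f_V(x) = g_V(y)$ with $x,y$ coming from the two different components can only arise from a common preimage in $A$. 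Uniqueness of pushouts (Theorem~\ref{thm:po-uniqueness}) then yields an isomorphism between $D$ and the glued pushout object that commutes with $f$, $g$ and the gluing legs, along which the reduced chain condition transports verbatim to the abstract diagram.

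With the reduced chain condition in hand, I would define $u$ pointwise: for each $v' \in V_{A'}$ set $u_V(v')$ to be the unique $a \in V_A$ with $b_V(a) = p_V(v')$ and $c_V(a) = t_V(v')$, which is legitimate precisely because $f_V(p_V(v')) = g_V(t_V(v'))$ follows from the hypothesis $f \circ p = g \circ t$; the edge map is defined identically. It then remains to check that $u$ is a graph morphism (source, target and label preservation follow routinely from the corresponding properties of $p$, $t$, $b$, $c$, cancelling the injective $b$ where needed), that $b \circ u = p$ and $c \circ u = t$ hold by construction, and that $u$ is unique, the last point again reducing to injectivity of $b$. The main obstacle is plainly the existence half of the reduced chain condition: because the pushout object $D$ is abstract, one is forced to route through the explicit gluing pushout and transport the property along the isomorphism of Theorem~\ref{thm:po-uniqueness}, and it is the bookkeeping of matching the four legs through that isomorphism (rather than any of the subsequent morphism verifications) that I expect to be the technically heaviest step.
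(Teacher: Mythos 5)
Your proposal is correct, but it takes a genuinely different route from the paper's proof. The paper works dually: it instantiates the canonical pullback construction (Def.~\ref{def:pullback-construction}) on $f$ and $g$ and then appeals to the uniqueness of pullbacks (Theorem~\ref{thm:uniqueness-pb}), whose biconditional form reduces the lemma to exhibiting a bijective morphism $a \mapsto \langle b(a), c(a)\rangle$ from $A$ to the fibre product, compatible with both projections. You instead verify the pullback universal property of Def.~\ref{def:pullback} directly, building the mediating morphism from the reduced chain condition, which you obtain by transporting along the isomorphism that the uniqueness of pushouts (Theorem~\ref{thm:po-uniqueness}) provides between $D$ and the concrete gluing of $B$ and $C$ along $A$ (Lemma~\ref{lemma:gluing}). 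The trade-off is clear: the paper's route gets morphism-hood, the commuting triangles, and uniqueness of the mediating morphism for free from the pullback-uniqueness package, at the price of proving bijectivity of the comparison map; your route pays with manual verifications (all of which go through --- they reduce to cancelling the injective $b$), but it makes the real crux explicit, since the surjectivity half of the paper's bijectivity obligation is exactly your reduced chain condition. For an abstract pushout that condition can only be established by passing to a concrete model in which the identifications are visible, which is presumably why the paper's formal lemma lives in the \texttt{Gluing} theory; so the two proofs share the same core, and yours merely exposes it. One small bookkeeping point: the pushout property of the gluing square is not part of the printed statement of Lemma~\ref{lemma:gluing}, but it is available in the formalisation (it is how pushouts are constructed in the direct derivation and Church--Rosser developments), so your appeal to it is legitimate.
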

The proof relies on the pullback construction (cf. Def.~\ref{def:pullback-construction}) and the fact that pullbacks
are unique (cf. Theorem~\ref{thm:uniqueness-pb}). We show this property in our formalisation within the \texttt{Gluing} theory (cf. Fig.~\ref{fig:locales}). 
\Snippet{inj-po-is-pb}

Furthermore, pushouts and pullbacks preserve injectivity (surjectivity) in the sense that the
opposite morphism of the corresponding diagram (see Fig.~\ref{fig:pushout} and Fig.~\ref{fig:pullback}) is also injective (surjective). 
\begin{lem}[Preservation of injective and surjective morphisms~\cite{Ehrig-Ehrig-Prange-Taentzer06a}]\label{lem:preservation-inj-surj}
  Given a pushout diagram in Fig.~\ref{fig:pushout}, if \(A \to B\) is injective (surjective), so is \(C \to D\).
  Given a pullback diagram in Fig.~\ref{fig:pullback}, if \(C \to D\) is injective (surjective), so is \(A \to B\).
\end{lem}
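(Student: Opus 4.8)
The plan is to split the statement into its four constituent implications: for a pushout, \(b\colon A\to B\) injective (resp.\ surjective) forces \(g\colon C\to D\) injective (resp.\ surjective), and dually, for a pullback, \(g\colon C\to D\) injective (resp.\ surjective) forces \(b\colon A\to B\) injective (resp.\ surjective). Although the pushout and pullback halves are formally dual, I would argue each concretely using the infrastructure already available, since the paper deliberately works with graph constructions rather than abstract adhesive machinery. The two pullback cases fall out of the explicit construction, the surjective pushout case is a short diagram chase, and the injective pushout case is the real work.

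For the pullback cases I would first reduce to the canonical construction. By the uniqueness of pullbacks (Theorem~\ref{thm:uniqueness-pb}) any pullback of \(f\) and \(g\) is isomorphic to the constructed one from Def.~\ref{def:pullback-construction}, where \(A=\{\langle x,y\rangle : f(x)=g(y)\}\), \(b=\mathrm{fst}\), and \(c=\mathrm{snd}\); injectivity and surjectivity transfer along the isomorphism, so it suffices to treat this \(A\). If \(g\) is injective and \(b\langle x_1,y_1\rangle=b\langle x_2,y_2\rangle\), then \(x_1=x_2\), hence \(g(y_1)=f(x_1)=f(x_2)=g(y_2)\), so \(y_1=y_2\) and the pairs coincide. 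If \(g\) is surjective and \(x\in B\), then \(f(x)\in D\) has a \(g\)-preimage \(y\in C\), whence \(\langle x,y\rangle\in A\) with \(b\langle x,y\rangle=x\); both arguments are carried out identically for nodes and edges.

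For the \emph{surjective} pushout case I would use commutativity (a locale assumption) together with the joint-surjectivity property of pushouts, namely that every node (edge) of \(D\) lies in the image of \(f\) or of \(g\). Given \(d\in D\), if \(d=g(y)\) there is nothing to do; if \(d=f(x)\), surjectivity of \(b\) gives \(a\in A\) with \(b(a)=x\), and then \(d=f(b(a))=g(c(a))\) by commutativity, so \(d\) is again in the image of \(g\). Running this for nodes and edges separately establishes surjectivity of \(g\).

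The \emph{injective} pushout case is the main obstacle. I would not try to reprove it from joint surjectivity and the reduced chain condition alone, since neither relates two distinct elements of \(C\). Instead I would exploit the universal property by building a separating cocone: a graph \(D'\) with an \emph{injective} morphism \(t\colon C\to D'\) and a morphism \(p\colon B\to D'\) satisfying \(p\circ b=t\circ c\). The natural choice keeps \(C\) intact and glues on only the \(b\)-free part of \(B\), taking \(V_{D'}=V_C+(V_B-b_V(V_A))\) (and analogously for edges), with \(t\) the inclusion of \(C\) and \(p\) sending \(b(a)\) to \(c(a)\) and fixing everything outside \(b_V(V_A)\); this is well defined precisely because \(b\) is injective, so \(b^{-1}\) exists on \(b(A)\). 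The universal property of the pushout then yields \(u\colon D\to D'\) with \(u\circ g=t\), and from \(g(y_1)=g(y_2)\) we get \(t(y_1)=u(g(y_1))=u(g(y_2))=t(y_2)\), so \(y_1=y_2\) by injectivity of \(t\). The hard part is that this \(D'\) is a \emph{bespoke} one-sided gluing: the gluing locale of Lemma~\ref{lemma:gluing} assumes \emph{both} legs injective, and here \(c\) need not be, so the construction cannot be reused directly and one must verify by hand that \(p\) preserves sources, targets, and labels, with a case split according to whether an endpoint lies in \(b_V(V_A)\). Conceptually this reflects the underlying fact that, when \(b\) is injective, the pushout identifies no two distinct elements of \(C\), because any two \(C\)-elements sharing a common \(B\)-neighbour \(b(a)=b(a')\) force \(a=a'\).
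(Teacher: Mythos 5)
Your proposal is correct, but note that the paper never writes out a proof of this lemma: it states the four implications as separate Isabelle lemmas inside the \texttt{pushout\_diagram} and \texttt{pullback\_diagram} locale contexts (\texttt{b\_inj\_imp\_g\_inj}, \texttt{b\_surj\_imp\_g\_surj}, \texttt{b\_bij\_imp\_g\_bij}, and pullback analogues) and cites Ehrig et al.\ for the mathematics, so the comparison is against the paper's infrastructure rather than a written argument. Measured that way, three of your four cases follow exactly the route the surrounding text suggests: the two pullback cases by reducing, via Theorem~\ref{thm:uniqueness-pb}, to the explicit construction of Def.~\ref{def:pullback-construction} and transferring injectivity/surjectivity along the isomorphism --- the same pattern the paper itself announces for the reduced chain condition (Lemma~\ref{lem:pb-redchain}) --- and the pushout-surjective case by combining the paper's joint-surjectivity lemma with commutativity. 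The genuinely nontrivial part is the pushout-injective case, and your separating-cocone argument is sound: the one-sided gluing \(D'\) with \(V_{D'} = V_C + (V_B - b_V(V_A))\) is a well-defined graph precisely because \(b\) is injective (its source and target functions need \(b^{-1}\) only on \(b(A)\)), \(t\) is an injective morphism, \(p\) is a morphism satisfying \(p \circ b = t \circ c\) after the case split you describe, and the mediating morphism \(u\) with \(u \circ g = t\) then forces \(g(y_1) = g(y_2) \Rightarrow t(y_1) = t(y_2) \Rightarrow y_1 = y_2\). You are also right that the paper's \texttt{gluing} locale cannot be instantiated here, since it requires both legs injective while \(c\) may not be; one remark worth adding is that in the formal setting the universal property must be invoked in its generalised form (\texttt{universal\_property\_exist\_gen}), because your \(D'\) lives over a sum type rather than over naturals. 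What your cocone construction buys over the quotient-style textbook argument (identification chains in \(B +_A C\) collapse when \(b\) is injective) is that it needs no pushout \emph{construction} at all, only the universal property --- which is exactly what is available inside the locale.
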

We formalise these properties independent of the pushout, pullback and injectivity or surjectivity using our infrastructure.
Inside the \texttt{pushout} locale context, if \(b \colon A \to B\) is injective, so is \(g \colon C \to D\) as depicted in Fig.~\ref{fig:pushout}.
\Snippet{pushout-b-inj-g}
In case, \(b \colon A \to B\) is surjective, so is \(g \colon C \to D\).
\Snippet{po-surj-b-is-g}
Consequently, in case \(A \to B\) is bijective, so is \(C \to D\).
\Snippet{po-bij-b-is-g}
The pullback statements are done similar in the corresponding locale context.

Certain forms of commutative diagrams give rise to pullbacks. This property is used in the proof of
the uniqueness of the pushout complement (cf. Theorem~\ref{thm:uniqueness-dd}).
\begin{lem}[Special pullbacks~\cite{Ehrig-Ehrig-Prange-Taentzer06a}]\label{lem:special-pb}
  The commutative diagram in Fig.~\ref{fig:pullback-specialdiag} is a pullback
  if \(m\) is injective.
\end{lem}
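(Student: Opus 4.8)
The plan is to instantiate the definition of a pullback directly, exploiting that the span $C \xleftarrow{\,id\,} C \xrightarrow{\,id\,} C$ with both outer legs equal to $m$ collapses, under injectivity of $m$, to the trivial case in which the two competing morphisms into $C$ are forced to coincide. In the locale notation this is the claim \texttt{pullback\_diagram C C C A idM idM m m}, so the four constituent morphisms are $idM \colon C \to C$ (twice) and $m \colon C \to A$ (twice).

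First I would discharge the four morphism obligations. The two copies of $m$ are morphisms because an \texttt{injective\_morphism} is, by inheritance, a \texttt{morphism}, and $idM$ is a morphism because every graph carries the identity as a sublocale (\texttt{identity\_morphism}, which is a \texttt{bijective\_morphism} and hence a \texttt{morphism}). Commutativity is immediate: on every node and edge the branch $m \circ idM$ equals $m \circ idM$, so \texttt{node\_commutativity} and \texttt{edge\_commutativity} follow by reflexivity once $idM \circ u = u$ is unfolded.

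The substance is the universal property. I would take the given natural graph $A'$ together with morphisms $p \colon A' \to C$ and $t \colon A' \to C$ satisfying $m \circ p = m \circ t$ pointwise on $V_{A'}$ and $E_{A'}$. The key step is that, since $p$ and $t$ send the nodes (edges) of $A'$ into $V_C$ ($E_C$) by \texttt{morph\_node\_range} and \texttt{morph\_edge\_range}, and $m$ is injective on $V_C$ and $E_C$ (\texttt{inj\_nodes}, \texttt{inj\_edges}), the equation $m_V(p_V(v)) = m_V(t_V(v))$ forces $p_V(v) = t_V(v)$, and analogously for edges; hence $p$ and $t$ agree everywhere on $A'$. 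I then supply the witness $u := p$: it is a morphism $A' \to C$ by assumption, and because $idM \circ u = u$ the four required equations reduce to $p = p$ and $p = t$, the latter being precisely the agreement just derived. For uniqueness I reuse the same simplification: any competitor $y$ with $idM \circ y = p$ on the domain satisfies $y = p$ pointwise on $V_{A'}$ and $E_{A'}$, which is exactly the domain-restricted uniqueness clause packaged in \texttt{Ex1M}.

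I expect the only real friction to be bookkeeping rather than mathematics: threading the range facts so that injectivity is applied on $V_C$ and $E_C$ (not on the ambient carrier type), and discharging the bundled \texttt{Ex1M} predicate, which combines the morphism property, the four pointwise commuting equations, and the domain-restricted uniqueness quantifier in one statement. The repeated rewriting by the identity-composition simplifications ($idM \circ u = u$) is what makes each branch collapse to a triviality, so the main care is to ensure those rewrites fire before the injectivity argument is invoked.
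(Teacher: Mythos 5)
Your proposal is correct and takes essentially the same route as the paper: the formalisation states the goal exactly as \texttt{pullback\_diagram C C C A idM idM m m} and notes that Isabelle discharges it automatically from the supplied facts, i.e.\ by the direct verification you spell out. Your manual argument (trivial commutativity, witness $u := p$, injectivity of $m$ on $V_C$ and $E_C$ forcing $p = t$ pointwise, and the $idM$-composition rewrites collapsing the \texttt{Ex1M} existence and uniqueness clauses) is precisely the reasoning that underlies the automated proof.
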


In Isabelle, we describe this lemma as follows.
\Snippet{special-diag-is-pb}
Isabelle is able to discharge this goal using supplied facts automatically. 

\begin{defi}[Reduced chain-condition \cite{EK79}]\label{def:reduced-chain}
  The commutative diagram in Fig.~\ref{fig:commutative-diagram} satisfies the
  \emph{reduced chain-condition}, if for all \(b' \in B\) and \(c' \in C\)
  with \(f(b') = g(c')\) there is \(a \in A\) such that \(b(a) = b'\)
  and \(c(a) = c'\).
\end{defi}
We show that pullbacks satisfy the reduced chain-condition.
\begin{lem}[Pullbacks satisfy the reduced chain-condition]
  \label{lem:pb-redchain}
  Each pullback diagram as depicted in Fig.~\ref{fig:pullback} satisfies the
  reduced chain-condition.  
\end{lem}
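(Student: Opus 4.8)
The plan is to reduce the abstract pullback to the concrete set-theoretic pullback of Definition~\ref{def:pullback-construction} and then transport the property along the canonical isomorphism supplied by uniqueness. The reduced chain-condition holds essentially by construction for the set-based pullback: there the object over $f$ and $g$ has node set $\{(x,y) \in V_B \times V_C \mid f_V(x) = g_V(y)\}$ with projections $b(x,y) = x$ and $c(x,y) = y$, so given $x \in V_B$, $y \in V_C$ with $f_V(x) = g_V(y)$, the pair $(x,y)$ is itself the required witness. The first step is therefore to record this observation as a lemma inside the \texttt{pullback\_construction} locale (the \texttt{reduced\_chain\_condition\_nodes} fact for the concrete construction), and analogously for edges.

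Next I would interpret the \texttt{pullback\_construction} locale using the morphisms $f$ and $g$ of the ambient \texttt{pullback\_diagram}, obtaining a concrete pullback object $A^{*}$ together with its projections $b^{*} \colon A^{*} \to B$ and $c^{*} \colon A^{*} \to C$. By correctness of the construction, $A^{*}, b^{*}, c^{*}$ form a pullback of $f$ and $g$. Applying the uniqueness of pullbacks (Theorem~\ref{thm:uniqueness-pb}) to the current pullback $A, b, c$, with $A^{*}$ in the role of $A'$ and $b^{*}, c^{*}$ in the roles of $p, t$, yields an isomorphism $u \colon A^{*} \to A$ satisfying $b \circ u = b^{*}$ and $c \circ u = c^{*}$ on the nodes and edges of $A^{*}$.

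Finally I would transport the witness. Given $x \in V_B$ and $y \in V_C$ with $f_V(x) = g_V(y)$, the pair $(x,y)$ lies in $V_{A^{*}}$, so $a \mathrel{:=} u_V(x,y)$ lies in $V_A$ because $u$ is a morphism. Using the commuting equations I obtain $b_V(a) = (b \circ u)_V(x,y) = b^{*}_V(x,y) = x$ and, likewise, $c_V(a) = y$, so $a$ is the desired element; the edge case is identical.

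The main obstacle I anticipate is bookkeeping rather than mathematics: the constructed object $A^{*}$ inhabits a product type distinct from the type of the abstract pullback $A$, so I must ensure that Theorem~\ref{thm:uniqueness-pb} is applied with the correct type instantiation and that the isomorphism's commutativity is invoked only on the concrete witness $(x,y)$. Indeed, that theorem asserts the commuting equations only on the node and edge sets of $A^{*}$ rather than pointwise over the whole universe of identifiers; but since $(x,y) \in V_{A^{*}}$ holds exactly when the hypotheses hold, this restriction is harmless. (An alternative would bypass uniqueness by applying the pullback universal property to a single-node graph mapping to $x$ and $y$, but matching labels and choosing identifiers for that auxiliary graph is more fiddly than reusing the already-proven concrete case.)
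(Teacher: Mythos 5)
Your proposal is correct and takes essentially the same route as the paper: the paper also first proves the reduced chain-condition for the concrete set-theoretic construction inside the \texttt{pullback\_construction} locale (where the pair $(x,y)$ is the witness by definition), and then lifts it to the abstract \texttt{pullback\_diagram} locale using the correctness of the construction together with the uniqueness of pullbacks (Theorem~\ref{thm:uniqueness-pb}). Your anticipated bookkeeping concern (the product-typed concrete object versus the abstract $A$, and the commuting equations holding only on $V_{A^{*}}$ and $E_{A^{*}}$) is exactly the technical point the formalisation has to manage, and your handling of it is sound.
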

We first prove this lemma, separately for nodes and edges, state this lemma in Isabelle as follows:
\Snippet{pbconstr-rchain}
Isabelle's simplifier is able to discharge all goals if supplied with the corresponding facts. Especially, that the construction is also a pullback (commutativity) and the definition of the graph \(A\), together with the two morphisms.
Subsequently, we can state a more generalised lemma inside the \texttt{pullback\_diagram} locale:
\Snippet{pullback-reduced-chain}
%The proof relies on the uniqueness of the pullback construction (cf. Theorem~\ref{thm:uniqueness-pb}). 
%
Our proof relies on the pullback construction (cf. Def.~\ref{def:pullback-construction}) and the fact that
pullbacks are unique (cf. Theorem~\ref{thm:uniqueness-pb}).

\begin{defi}[Jointly surjective]
  Given injective graph morphisms \(f \colon B \to D\) and \(g \colon C \to D\).
  \(f\) and \(g\) are \emph{jointly surjective}, if each item in \(D\)
  has a preimage in \(B\) or \(C\).
\end{defi}

\begin{lem}[Pushouts are jointly surjective]
Given the pushout diagram in Fig.~\ref{fig:pushout}. The pair \((f,g)\) is \emph{jointly surjective}.
\end{lem}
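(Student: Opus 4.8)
The plan is to argue via the universal property of the pushout, showing that the subgraph of $D$ spanned by the images of $f$ and $g$ must already be all of $D$. First I would define $D'$ to be the subgraph of $D$ whose node set is $f_V(V_B) \cup g_V(V_C)$ and whose edge set is $f_E(E_B) \cup g_E(E_C)$, inheriting the source, target, and labelling maps from $D$. The first obligation is to check that $D'$ is a well-formed graph: for any edge $f_E(e)$ with $e \in E_B$, source preservation of $f$ gives $s_D(f_E(e)) = f_V(s_B(e))$, and since $s_B(e) \in V_B$ by source integrity, this value lies in $f_V(V_B) \subseteq V_{D'}$; the target case and the $g$-image case are symmetric, so source and target integrity hold, and finiteness is inherited from $D$.

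Next I would corestrict $f$ and $g$ to $D'$, obtaining morphisms $f' \colon B \to D'$ and $g' \colon C \to D'$ carried by the same underlying maps. These are morphisms because every preservation equation already holds in $D$ and $D'$ is a full subgraph, and the commutativity $f' \circ b = g' \circ c$ follows from the commutativity of the original pushout (\texttt{node\_commutativity} and \texttt{edge\_commutativity}). I would then feed $f'$ and $g'$ into the generalised universal property \texttt{universal\_property\_exist\_gen} (the variant avoiding the \texttt{ngraph} conversion), which yields a unique morphism $u \colon D \to D'$ with $u \circ f = f'$ and $u \circ g = g'$ on the relevant domains.

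The crux is the uniqueness step. Writing $\iota \colon D' \to D$ for the inclusion (carried by identity maps on the smaller carrier sets), both $\iota \circ u$ and the identity $\mathrm{id}_D$ are morphisms $D \to D$ satisfying the equations $(\,\cdot\,) \circ f = f$ and $(\,\cdot\,) \circ g = g$; for $\iota \circ u$ this uses $\iota \circ f' = f$ and $\iota \circ g' = g$. Applying the \texttt{Ex1M} uniqueness clause of the universal property, instantiated with $D$ itself in place of $D'$ and with the morphisms $f, g$, forces $\iota \circ u$ and $\mathrm{id}_D$ to agree on all of $V_D$ and $E_D$. For an arbitrary $x \in V_D$ this gives $x = \iota_V(u_V(x)) = u_V(x) \in V_{D'} = f_V(V_B) \cup g_V(V_C)$, which is exactly joint surjectivity on nodes; the edge case is identical and yields the stated \texttt{joint\_surjectivity\_edges} goal, so the single subgraph construction discharges both versions at once.

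I expect the main obstacle to be the bookkeeping forced by the use of total functions and the \texttt{Ex1M} binder: the uniqueness conclusion only asserts agreement on the carrier sets of $D$, so I must extract equality pointwise on $V_D$ and $E_D$ rather than as a global function equality, and I must discharge the well-definedness of $D'$ together with the two corestricted morphisms and the inclusion. These obligations are conceptually routine but each requires invoking the appropriate integrity and preservation axioms, and care is needed to keep the node-type and edge-type annotations consistent across the corestrictions.
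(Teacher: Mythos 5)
Your proof is correct, but it takes a genuinely different route from the paper. The paper argues by contradiction: assuming some item $x$ of $D$ has no preimage under $f$ or $g$, it builds a new graph as the disjoint union of $D$ with a fresh copy of that item, and then exhibits two \emph{distinct} morphisms from $D$ into this enlarged graph (the canonical embedding, and the map redirecting $x$ to its copy) which both make the cocone triangles commute --- they agree on the images of $f$ and $g$ precisely because $x$ lies outside those images --- contradicting the uniqueness clause \texttt{Ex1M} of the universal property. Your argument is instead a direct image factorisation: corestrict $f$ and $g$ to the image subgraph $D'$, obtain $u \colon D \to D'$ from the existence clause, and compare $\iota \circ u$ with $\mathrm{id}_D$ via the uniqueness clause instantiated at the cocone $(D, f, g)$, forcing every item of $D$ into the image. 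Both proofs pivot on the same uniqueness clause, so neither is more elementary, but the trade-offs differ: your construction stays at the carrier types of $D$ (a subgraph, two corestrictions, an inclusion), avoiding the sum-type change and \texttt{Inl}/\texttt{Inr} bookkeeping that the paper's disjoint union incurs, and it composes naturally with the existing infrastructure (\texttt{universal\_property\_exist\_gen}, \texttt{wf\_morph\_comp}, the identity morphism \texttt{idM}); in exchange, the paper's version only has to handle a single offending element rather than establish well-formedness of the entire image subgraph, and it is the classic argument that cocone maps of a pushout are jointly epic, specialised to graphs by duplication.
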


We express this property for edges in Isabelle as follows:
\Snippet{pushout-joint-surj}
The proof is by contradiction. We construct the graph \(D'\) using the disjoint union with an edge \(e \in D\), such that it has no preimage in \(B\) or \(C\) via \(f\) and \(g\). We can now construct two different morphisms from \(D'\) to \(D\), which contradicts the uniqueness property of the pushout.
\begin{thm}[Pushout characterization \cite{EK79}] \label{thm:pushout-characterization}
  The commutative diagram in Fig.~\ref{fig:commutative-diagram} is a pushout,
  if the following conditions are true:
  \begin{enumerate}
  \item The morphisms \(b,c,f,g\) are injective.
  \item The diagram satisfies the reduced chain-condition.
  \item The morphisms \(g,f\) are jointly surjective.
  \end{enumerate}
\end{thm}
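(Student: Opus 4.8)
The plan is to instantiate the \texttt{pushout\_diagram} locale on \(A,B,C,D\) with \(b,c,f,g\). Two of the three locale axioms are immediate: \texttt{node\_commutativity} and \texttt{edge\_commutativity} appear verbatim among the hypotheses. All the real work lies in the \texttt{universal\_property}. Since the locale states that property over the \texttt{ngraph} representation, I would first prove a clean, natural-valued version — for an arbitrary graph \(D'\) with morphisms \(p \colon B \to D'\) and \(t \colon C \to D'\) satisfying \(p \circ b = t \circ c\), there is a unique \(u \colon D \to D'\) with \(u \circ f = p\) and \(u \circ g = t\) — and only at the end push it through the \texttt{to\_ngraph}/\texttt{to\_nmorph} machinery using the correctness lemmas together with the fact that \texttt{from\_nat (to\_nat x) = x}.

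The construction of \(u\) is the crux. I would define \(u\) by a case split driven by joint surjectivity: given an item \(d \in D\), the \texttt{joint\_surjectivity} hypotheses guarantee a preimage under \(f\) or under \(g\); I set \(u(d) = p(x)\) where \(x\) is an \(f\)-preimage of \(d\) when \(d\) lies in the image of \(f\), and \(u(d) = t(y)\) with \(y\) a \(g\)-preimage otherwise (using \texttt{inv\_into} to select the preimage), and symmetrically for nodes. The delicate point is well-definedness. If the same \(d\) has two preimages on the same side, injectivity of \(f\) (resp.\ \(g\)) forces them to coincide, so \(p\) (resp.\ \(t\)) returns the same value. If \(d = f(x) = g(y)\) lies in both images, the reduced chain-condition supplies \(a \in A\) with \(b(a) = x\) and \(c(a) = y\); then
\begin{align*}
  p(x) = p(b(a)) = (p \circ b)(a) = (t \circ c)(a) = t(c(a)) = t(y),
\end{align*}
using the hypothesis \(p \circ b = t \circ c\). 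This is exactly where injectivity of \(f,g\) and the reduced chain-condition are consumed.

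With \(u\) in hand I would discharge the remaining obligations. That \(u\) is a graph morphism (range, source/target preservation, label preservation) follows from the corresponding properties of \(p,t,f,g\) after rewriting each \(d\) through its chosen preimage; the source/target cases additionally use that \(f,g\) preserve endpoints, so that the preimage of an endpoint is the endpoint of the preimage. The triangle equalities hold by construction: for \(x \in B\) we have \(f(x)\) in the image of \(f\), and since \(f\) is injective the chosen preimage is \(x\) itself, whence \(u(f(x)) = p(x)\); dually on \(C\). Uniqueness is the easiest part: if \(u'\) also satisfies the triangles, then for every \(d \in D\) joint surjectivity writes \(d = f(x)\) or \(d = g(y)\), and \(u'(d) = p(x) = u(d)\) (resp.\ \(t(y)\)), so \(u'\) agrees with \(u\) on \(V_D\) and \(E_D\).

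I expect the main obstacle to be neither the mathematics of well-definedness nor uniqueness — both routine once the case analysis is fixed — but the bookkeeping around the \texttt{ngraph} encoding of the universal property: translating the supplied \(x,y\) on \texttt{to\_ngraph} objects back to honest morphisms on \(B\) and \(C\), applying the construction, and translating \(u\) forward while keeping the \texttt{inv\_into} terms and the type annotations aligned. Isolating the natural-valued universal property as a separate lemma and invoking the correctness lemmas only once is the cleanest way to contain this friction.
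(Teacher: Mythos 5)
Your proposal is correct and takes essentially the approach that the paper formalises (the paper itself gives no textual proof of this theorem, deferring to the cited EK79 argument and the Isabelle lemma): commutativity is immediate, the mediating morphism is built by selecting \(f\)- or \(g\)-preimages via joint surjectivity, well-definedness and the triangle over \(C\) come from injectivity plus the reduced chain-condition and \(p \circ b = t \circ c\), and uniqueness follows again from joint surjectivity. Your plan of proving a type-generic universal property first and only then crossing the \texttt{to\_ngraph}/\texttt{to\_nmorph} encoding via the correctness lemmas mirrors exactly the infrastructure the paper sets up for this purpose.
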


\Snippet{pushout-characterization}

\begin{figure}
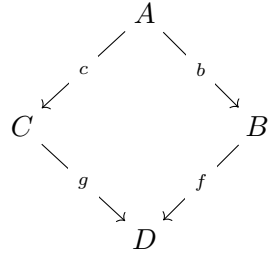

  \ctikzfig{commutative-diagram}
  \caption{Commutative diagram}
  \label{fig:commutative-diagram}
\end{figure}

\begin{figure}
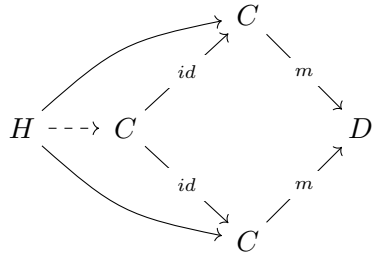

   \ctikzfig{pullback-specialdiag}
  \caption{Special pullback diagram}
  \label{fig:pullback-specialdiag}
\end{figure}

The following theorem implies the uniqueness of the pushout complement, which is known to hold if the morphism $K \to L$\/ in the applied rule is injective, even if the matching morphism $L \to G$ is non-injective \cite{Ros75a}.  In our case, both morphisms are injective.

\begin{thm}[Uniqueness of direct derivations]\label{thm:uniqueness-dd}
  Let \((1)+(2)\) and \((3)+(4)\) be direct derivations as depicted in Fig.~\ref{fig:uniqueness-direct-derivation}.
  Then \(D \cong D'\) and \(H \cong H'\).
\end{thm}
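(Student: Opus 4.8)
The plan is to prove the two isomorphisms in sequence: first the uniqueness of the pushout complement, $D \cong D'$, and then, by transporting the resulting interface-compatible isomorphism through the second pushouts, the uniqueness of the result, $H \cong H'$. Throughout, write the rule as $r = \langle L \xleftarrow{b} K \xrightarrow{b'} R\rangle$ and note that both derivations share the same $r$, the same host $G$, and the same match $g \colon L \to G$; the first yields the complement $D$ with $d \colon K \to D$ and $c \colon D \to G$, the second the complement $D'$ with $d' \colon K \to D'$ and $m \colon D' \to G$. The whole argument is organised around the commutative cube of Fig.~\ref{fig:commutative-cube}, whose bottom face is pushout $(1)$ and whose slanted face through $D'$ is pushout $(3)$ of the second derivation.

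For the first stage I would begin with the injectivity bookkeeping. Here $b$ and the match $g$ are injective by hypothesis; a short chase using commutativity of pushout $(1)$, namely $c \circ d = g \circ b$, together with injectivity of $g$ and $b$, shows $d$ (and likewise $d'$) is injective. Lemma~\ref{lem:preservation-inj-surj} then makes $c \colon D \to G$ and $m \colon D' \to G$ injective, and since $b,d$ (respectively $b,d'$) are injective, Lemma~\ref{lem:injective-po-is-pb} upgrades both pushouts to pullbacks. I then realise the cube object $U$ as the pullback of $c$ and $m$ (Def.~\ref{def:pullback-construction}), with projections $l \colon U \to D$ and $k \colon U \to D'$; since $c \circ d = g \circ b = m \circ d'$, the universal property induces the unique $h \colon K \to U$ with $l \circ h = d$ and $k \circ h = d'$. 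Pullback preservation of injectivity (Lemma~\ref{lem:preservation-inj-surj}) makes $l$ and $k$ injective, so it remains only to prove they are surjective, hence isomorphisms.

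The crux — and the main obstacle — is establishing surjectivity of $l$ and $k$ without the van Kampen property that underlies the adhesive-category proof of Lack and Sobociński. My plan is to substitute the pushout characterisation of Theorem~\ref{thm:pushout-characterization} for that step. Concretely, I would paste the pullback square $U = D \times_G D'$ against pushout $(3)$-viewed-as-pullback and use pullback composition/decomposition (Lemma~\ref{lem:comp-decomp}(c),(d)) to relate the faces of the cube, promoting the relevant pullback faces (in particular the back-right face $K \to U \to D$) to pushouts. Verifying the characterisation's hypotheses reduces to injectivity (already in hand), the reduced chain-condition (supplied by Lemma~\ref{lem:pb-redchain}, as the squares involved are pullbacks), and joint surjectivity; from the resulting pushouts one reads off that $l$ and $k$ are surjective, hence bijective, so $D \cong D'$ via $u = k \circ l^{-1}$. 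This $u$ is moreover interface-compatible, $u \circ d = k \circ l^{-1}\circ l \circ h = k \circ h = d'$, which is exactly what the second stage needs.

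For the second stage, pushout $(2)$ exhibits $H$ as a pushout of $R \xleftarrow{b'} K \xrightarrow{d} D$ with legs $f \colon R \to H$ and $c' \colon D \to H$, while pushout $(4)$ exhibits $H'$ as a pushout of $R \xleftarrow{b'} K \xrightarrow{d'} D'$ with legs $f' \colon R \to H'$ and $m' \colon D' \to H'$. Since $u \colon D \to D'$ is an isomorphism with $u \circ d = d'$, precomposing the leg $m'$ with $u$ turns $H'$ into a pushout of the very span $R \xleftarrow{b'} K \xrightarrow{d} D$: commutativity holds because $f' \circ b' = m' \circ d' = (m' \circ u) \circ d$, and the universal property transfers because $u$ is invertible. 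Two pushouts of the same span are isomorphic by uniqueness of pushouts (Theorem~\ref{thm:po-uniqueness}), which yields $H \cong H'$ and completes the proof. I expect the surjectivity argument of the first stage to absorb essentially all of the real difficulty, the remaining steps being a careful assembly of already-established lemmas.
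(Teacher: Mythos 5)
Your proposal is correct, and its first stage is essentially the paper's proof: the same commutative cube with pushout \((1)\) as bottom face and \((3)\) as the front-left face, the same realisation of \(U\) as the pullback of \(c\) and \(m\) via Def.~\ref{def:pullback-construction}, the same injectivity bookkeeping, the same pasting arguments via Lemma~\ref{lem:comp-decomp}, and crucially the same substitution of the pushout characterisation (Theorem~\ref{thm:pushout-characterization}, with Lemma~\ref{lem:pb-redchain} supplying the reduced chain-condition) for the adhesiveness step of Lack--Soboci\'nski, ending with \(u = k \circ l^{-1}\) and \(u \circ d = d'\). Where you diverge is the second stage. The paper applies the universal property twice --- of pushout \((2)\) to obtain \(u' \colon H \to H'\) and of pushout \((4)\) to obtain \(u'' \colon H' \to H\) --- and then uses the uniqueness part of the universal property of \((4)\) to show \(u' \circ u'' = \mathrm{id}\) and \(u'' \circ u' = \mathrm{id}\), so the isomorphism is assembled by hand. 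You instead transport pushout \((4)\) along \(u\): since \(f' \circ b' = m' \circ d' = (m' \circ u) \circ d\) and \(u\) is invertible, the cocone \((f',\, m' \circ u)\) makes \(H'\) a pushout of the same span \(R \leftarrow K \rightarrow D\) as \((2)\), and Theorem~\ref{thm:po-uniqueness} then yields \(H \cong H'\) directly. Both routes are sound; yours is arguably cleaner on paper because it reuses an already-proven theorem instead of re-deriving the mutual-inverse argument, but it does require establishing the transport step (that precomposing one leg of a pushout with an isomorphism again yields a pushout), which is not among the paper's stated lemmas and would need to be proved as a small auxiliary fact --- roughly the same quantity of universal-property reasoning that the paper spends on constructing \(u'\) and \(u''\) explicitly.
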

This theorem is stated in Isabelle/HOL within the locale \texttt{direct\_derivation}. Within the \texttt{assumes} part,
the second direct derivation, which we call \texttt{dd2}, is introduced.
\Snippet{direct-derivation-unique}
The uniqueness proof of direct derivations (see Fig.\ref{fig:uniqueness-direct-derivation}) is performed in two phases.
Firstly, we show the uniqueness of the pushout complement, which was first shown by Rosen~\cite{Ros75a}.
Subsequently, we show that given a bijection between \(D\) and \(D'\), the pushout object is also unique up to isomorphism.
%
%The proof of the uniqueness of the pushout complement (first half ob the proof) follows \cite{Lack-Sobocinski04a},
%except for the use of.
%
\begin{proof}[Theorem~\ref{thm:uniqueness-dd}]
  The first phase of our proof closely follows Lack and Sobocinski~\cite{Lack-Sobocinski04a}, except for
  the final step, where the authors rely on adhesiveness. We finish the proof by relying on the
  pushout characterization (cf. Theorem \ref{thm:pushout-characterization}).
  Given the two pushout diagrams \((1)\) and \((3)\) in Fig.~\ref{fig:uniqueness-direct-derivation}
  with injective \(K \to D\) and \(K \to D'\).
  To show the existence of a bijection between \(D\) and \(D'\), we construct the commutative cube
  in Fig.~\ref{fig:commutative-cube} with \((1)\) as the bottom face and \((3)\) as the front-left face,
  and show that \(l\) and \(k\) are bijections. For the latter, we show that the back-right and top faces are pushouts.
  (In \cite{Lack-Sobocinski04a}, this is shown by adhesiveness, while we argue with the pushout characterization of Theorem~\ref{thm:pushout-characterization}.)
  The front-right face is a pullback construction (cf. Def.~\ref{def:pullback-construction}) which we
  tell Isabelle by interpretation of the \texttt{pullback\_construction} locale.
  \Snippet{udd-fr}
  We use Isabelle's shorthand notation \texttt{..} for the \texttt{standard} tactic, to discharge the
  proof obligations which follow from the assumptions. Note that the pullback object together with the two morphisms is specified within the locale. Subsequent code will reference the pullback object by \texttt{fr.A}, the morphism \(l\) by \texttt{fr.b} and \(k\) by \texttt{fr.c} (see Fig.~\ref{fig:commutative-cube}). (The identifiers within locales are given by the definition. As a result, the pullback object of the front-right face is referred to as \texttt{A} rather than the interpretation parameter \texttt{K}.)
  From Lemma~\ref{lem:special-pb}, in our formalisation referenced by \texttt{fun\_algrtr\_4\_7\_2}, we know that the back-left face is a pullback.
  \Snippet{udd-bl}
  To show that the back-right face is a pullback, we start with the front-left face.
  As the front left face is a pushout and \(m\) is injective, \(n'\) is too (cf. Lemma~\ref{lem:preservation-inj-surj}).
  Since pushouts along injective morphisms are also pullbacks (cf. Lemma~\ref{lem:injective-po-is-pb}), the
  front-left face is also a pullback.
  Using the pullback composition (cf. Lemma~\ref{lem:comp-decomp}), the back face is pullback.
 \Snippet{udd-b}
  We define \(h \colon K \to U\) using both, the \(d\) and \(d'\) morphisms as \(h~x = (d~x, d'~x)\) and
  subsequently prove the morphism properties.
 \Snippet{udd-defh}
  We follow by showing that the top and bottom face commutes, i.e., \(d' \circ id = k \circ h\) and
  \(g \circ b = c \circ d\), respectively. This establishes the fact, that the right-side of the cube is
  a pullback. Using the pullback decomposition (cf. Lemma~\ref{lem:comp-decomp}), the back-right face is a pullback.
  To approach the top-face, we start by showing it is a pullback, and subsequently it is also a pushout.
  Since \(m\) is injective, from Lemma~\ref{lem:injective-po-is-pb} we know the bottom-face is also a pullback.
  Using the pullback composition (cf. Lemma~\ref{lem:comp-decomp}), the bottom and back-left face is a pullback.
  By commutativity of the bottom face \(g \circ b = c \circ d\) and back-right face \(l \circ h = d \circ id\),
  the front-right and top face is a pullback.
  By the pullback decomposition (cf. Lemma~\ref{lem:comp-decomp}) we can show that the top face is a pullback.
  We show this pullback is also a pushout by using the pushout characterization (cf. Theorem~\ref{thm:pushout-characterization}).
  Therefore, we need to show that \(h\) is injective, which follows from the construction above of \(h\).
  \Snippet{udd-injh}
  Joint surjectivity of \(k\) and \(d'\) follows from the pullback construction and the
  reduced-chain condition (cf. Lemma~\ref{def:reduced-chain}) of the front-left and top face.
  Note, that the reduced-chain condition holds for all pullbacks.
  Finally, we need to show that \(k\) and \(l\) are bijections.
  Since the top face is a pushout and the \(C \to C\) morphism is a bijection, by Lemma~\ref{lem:preservation-inj-surj}, so is \(k\).
  \Snippet{udd-bijk}
  To show \(l\) is a bijection, we show that the back-right face is a pushout by using the pushout characterization.
  The bijectivity of \(l\) follows from the fact pushouts preserve bijections.
  We follow by defining the morphism \(u \colon D \to D'\) as the composition of the \(l^{-1}\) and \(k\).
  The inverse of \(l\) is obtained by using a lemma within our formalisation of bijective morphisms, stating the existence
  of the inverse (\texttt{ex\_inv}) using the \texttt{obtain} keyword:
  \Snippet{udd-linv}
  We finish the first phase by defining the morphism \(u \colon D \to D'\) as \(k~\circ~l^{-1}\) and 
  subsequently prove that morphism composition preserves bijections (using the already
  proven \texttt{bij\_comp\_is\_bij} lemma).
  \Snippet{udd-defu}
  \Snippet{udd-biju}
  
  The second phase is to show the existence of an isomorphism \(H \to H'\). We start by obtaining
  \(u' \colon H \to H'\) and \(u'' \colon H' \to H\) and show they are inverses.
  \begin{figure}
    \tikzfig{uniqueness-h-1}
  \caption{Construction of \(u'\)}
  \label{fig:uniqueness-h-1}
\end{figure}
\begin{figure}
  \center
   \tikzfig{uniqueness-h-2}
  \caption{Construction of \(u''\)}
  \label{fig:uniqueness-h-2}
\end{figure}
  We use the universal property of the pushout depicted in Fig.~\ref{fig:uniqueness-h-1}, which requires us to show commutativity: \(f' \circ b' = m' \circ u \circ d\). So we substitute \(u \circ d = d'\) into the commutativity equation of pushout \((4)\) (\(f' \circ b' = m' \circ d'\)) in Fig.~\ref{fig:uniqueness-direct-derivation}. We get \(u \circ d = d'\) as follows:
  \( u \circ d \overset{(1)}{=} k \circ l^{-1} \circ d \overset{(2)}{=} k \circ l^{-1} \circ l \circ h \overset{(3)}{=}
    k \circ h \overset{(4)}{=} d'\).
  Here, \((1)\) is justified by the definition of \(u\), \((2)\) by the definition of \(l\) and \(h\) (which makes the back-right face in Fig.~\ref{fig:commutative-cube} commute), \((3)\) by inverse cancellation, and finally \((4)\) by the definitions of \(k\) and \(h\) (similarly to step \((2)\)).
  We obtain \(u'' \colon H' \to H\) by using the universal property of the pushout depicted in Fig.~\ref{fig:uniqueness-h-2}. We show the commutativity \(f \circ b = c' \circ u^{-1} \circ d'\) by substituting \(u^{-1} \circ d' = d\) into the commutativity equation of pushout \((2)\) (\(f \circ b' = c' \circ d\)) in Fig.~\ref{fig:uniqueness-direct-derivation}:
  \(u^{-1} \circ d' \overset{(5)}{=} u^{-1} \circ u \circ d \overset{(6)}{=} d\).
  Here, \((5)\) is justified by the above proven equation \(u \circ d = d'\), and \((6)\) follows from cancellation of inverses.
    The final steps are to show \(u' \circ u'' = id\) and \(u'' \circ u' = id\). To show the first equation, we start
  with \(f' = u' \circ u'' \circ f'\) and \(m' = u' \circ u'' \circ m'\), which we get from the definitions of \(u'\) and \(u''\).
  Using the universal property of pushout \((4)\) in Fig.~\ref{fig:uniqueness-direct-derivation} together with \(H', f', m'\), we conclude that the identity is the unique morphism \(H' \to H'\) that makes the triangles commute. If \(u' \circ u''\) makes the triangles commute as well, it is equal to the identity morphism.
  The first triangle commutes because \(u' \circ u'' \circ f' \overset{(7)}{=} u' \circ f \overset{(8)}{=} f'\).
  Here, \((7)\) and \((8\)) are justified by the corresponding construction of \(u'\) and \(u''\) (see the triangles in Fig.~\ref{fig:uniqueness-h-1} and Fig.~\ref{fig:uniqueness-h-2}).
  For the second triangle, we start by using the commutativity of the bottom triangle in Fig.~\ref{fig:uniqueness-h-1} and composing to the right with \(u^{-1}\): \(u' \circ c' \circ u^{-1} = m' \circ u \circ u^{-1}\). By cancellation of inverses we get \(u' \circ c' \circ u^{-1} = m'\) and by substituting \(c' \circ u^{-1}\) using the commutativity of the bottom triangle in Fig.~\ref{fig:uniqueness-h-2}, we prove that \(u' \circ u'' \circ m' = m'\).
  Showing that \(u'' \circ u' = id\) follows analogously and is omitted to save space.
\end{proof}
\begin{figure}
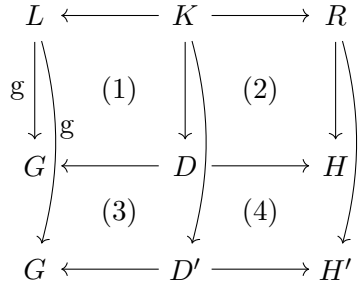

    \center
    \tikzfig{uniqueness-direct-derivation}
  \caption{Uniqueness of direct derivations}
  \label{fig:uniqueness-direct-derivation}
\end{figure}

\begin{figure}
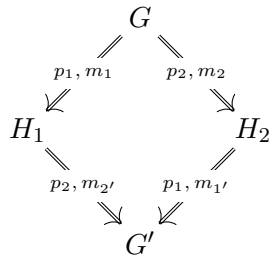

  \center
  \tikzfig{diamond-property}
  \caption{Church-Rosser Theorem}
  \label{fig:church-rosser}
\end{figure}
With the uniqueness of direct derivations (cf. Theorem~\ref{thm:uniqueness-dd}), we get the uniqueness of
the pushout complement.
\begin{cor}[Uniqueness of pushout complements]
  Given a pushout as depicted in Fig.~\ref{fig:pushout} where \(A \to B\) is injective.
  Then the graph \(D\) is unique up to isomorphism.
\end{cor}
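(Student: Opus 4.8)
The plan is to obtain this corollary as a consequence of the uniqueness of direct derivations (Theorem~\ref{thm:uniqueness-dd}), using the observation that the pushout complement of the left-hand pushout of a direct derivation is precisely its context graph. Concretely, the graph whose uniqueness is asserted is the pushout complement, namely the graph $C$ of Fig.~\ref{fig:pullback}\kern-0.1em\relax\nobreak\ (the object $C$ in Fig.~\ref{fig:pushout}), and I must show it is determined up to isomorphism by the composable pair $A \xrightarrow{b} B \xrightarrow{f} D$. I would read the given pushout $A \xrightarrow{b} B \xrightarrow{f} D \xleftarrow{g} C \xleftarrow{c} A$ as the pushout $(1)$ of a direct derivation, identifying the interface with $A$, the left-hand side with $B$, the host graph with $D$, and the context with $C$. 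The two injectivity hypotheses demanded by Theorem~\ref{thm:uniqueness-dd} are exactly those present in the DPO setting: $b \colon A \to B$ is the injective interface morphism of the rule, and $f \colon B \to D$ is the injective match. A second pushout complement $C'$ over the same data, with maps $c' \colon A \to C'$ and $g' \colon C' \to D$, likewise appears as the pushout $(1)$ of a second direct derivation sharing the same rule, match, and host.

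To complete each pushout into a full direct derivation I would use the trivial rule $\langle B \leftarrow A \xrightarrow{\mathrm{id}_A} A\rangle$, i.e.\ take the right-hand side $R = A$ with the identity as $K \to R$; this is a valid rule since both $b$ and $\mathrm{id}_A$ are injective. The required pushout $(2)$ is then the pushout of $A \xrightarrow{\mathrm{id}_A} A$ and $c \colon A \to C$ (respectively $c'$). Pushing out along the identity returns the context itself: the pushout object is $C$, with $A \to C$ given by $c$ and $C \to C$ the identity, so that commutativity $c \circ \mathrm{id}_A = \mathrm{id}_C \circ c$ holds and the result graph satisfies $H \cong C$ (and $H' \cong C'$ symmetrically). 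Having assembled both direct derivations over the common rule and match, I would invoke Theorem~\ref{thm:uniqueness-dd}, which delivers an isomorphism between the two context graphs and between the two result graphs. Since the contexts are $C$ and $C'$, the first conjunct is already the desired $C \cong C'$ (and the second conjunct, via $H \cong C$ and $H' \cong C'$, yields it once more).

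The main obstacle I anticipate is verifying that pushing out along the identity genuinely yields a pushout, together with the bookkeeping needed to present both diagrams as interpretations of the \texttt{direct\_derivation} locale. Note that here $A \to C$ need not be injective, so the degenerate pushout $(2)$ cannot be handled by the injective-specific machinery; I would instead establish it either by a direct check of the universal property or as a degenerate instance of the gluing construction (Lemma~\ref{lemma:gluing}) with $b = \mathrm{id}_A$, where $R - b(K) = \emptyset$ collapses the gluing to $C$ and the gluing morphism to the identity. Beyond this, care is needed to align the \texttt{countable} type instances of the graphs and the numerous morphism names so that both derivations share one and the same rule and match; once this scaffolding is in place, the appeal to Theorem~\ref{thm:uniqueness-dd} is immediate.
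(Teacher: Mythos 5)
Your proposal is correct and follows the paper's own route: the paper obtains this corollary from Theorem~\ref{thm:uniqueness-dd} and omits the details, and your completion of the given pushout to a direct derivation via the trivial rule $\langle B \leftarrow A \xrightarrow{\mathrm{id}_A} A \rangle$, whose second square is the identity pushout $(C, c, \mathrm{id}_C)$, is a sound way to make exactly that derivation precise. Your two adjustments also agree with the paper's underlying Isabelle statement (corollary \texttt{uniqueness\_pc} in the \texttt{pushout\_diagram} locale): the match $B \to D$ must be injective in addition to $A \to B$ (the \texttt{direct\_derivation} locale demands an injective match, so the prose statement's omission of this hypothesis is a slip, not something you could avoid), and the graph whose uniqueness is asserted is the complement $C$, not the pushout object $D$.
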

We omit the proof to conserve space. The upcoming section introduces the so-called Church-Rosser Theorem,
which states that parallell independent direct derivations have the diamond property.
\section{Church-Rosser Theorem}\label{sec:church-rosser}
The Church-Rosser Theorem refers to the idea that two graph transformation rules can be applied independently of each other,
either sequentially or in parallel, without changing the final result. We follow the independence characterization of
direct derivations given in~\cite{Ehrig-Ehrig-Prange-Taentzer06a}.
\begin{defi}[Parallel independence~\cite{Ehrig-Ehrig-Prange-Taentzer06a}]
\label{def:parallel-independence}
  The two direct derivations \(G\Rightarrow_{p_1, m_1} H_1\) and \(G \Rightarrow_{p_2, m_2} H_2\)
  in Fig.~\ref{fig:parallel-independence} are \emph{parallel independent} if there exists morphisms \(L_1 \to D_2\) and
  \(L_2 \to D_1\) such that \(L_1 \to D_2 \to G  = L_1 \to G\) and \(L_2 \to D_1 \to G = L_2 \to G\). \qed
\end{defi}
\begin{figure}
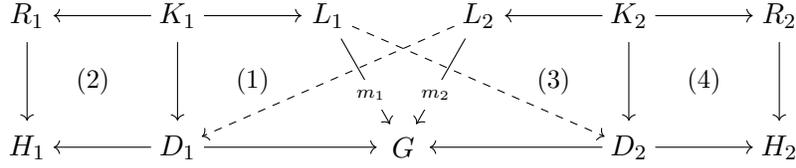

  \center
  \tikzfig{parallel-independent}
  \caption{Parallel independence}
  \label{fig:parallel-independence}
\end{figure}

\Snippet{parallelindependence}

\begin{defi}[Sequential independence~\cite{Ehrig-Ehrig-Prange-Taentzer06a}]
\label{def:sequential-independence}
  The two direct derivations \(G\Rightarrow_{p_1, m_1} H_1\) and \(G \Rightarrow_{p_2, m_2} H_2\)
  in Fig.~\ref{fig:sequential-independence} are \emph{sequential independent} if there exists morphisms \(R_1 \to D_2\) and
  \(L_2 \to D_1\) such that \(R_1 \to D_2 \to H  = R_1 \to H\) and \(L_2 \to D_1 \to H = L_2 \to H\).\qed
\end{defi}

\begin{figure}
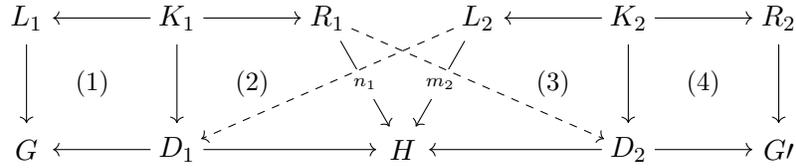

  \center
  \tikzfig{sequential-independent}
  \caption{Sequential independence}
  \label{fig:sequential-independence}
\end{figure}

\Snippet{sequentialindependence}

\begin{thm}[Church-Rosser Theorem~\cite{EK76}] \label{thm:church-rosser}
  Given two parallel independent direct derivations \(G \Rightarrow_{p_1,m_1} H_1\) and \(G \Rightarrow_{p_2, m_2} H_2\),
  there is a graph \(G'\) together with sequential independent direct derivations \(H_1 \Rightarrow_{p_2, m'_2} G'\) and
  \(H_2 \Rightarrow_{p_1, m'_1} G'\).
\end{thm}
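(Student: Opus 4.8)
The plan is to build a single common-context graph $D$ as the pullback of the two pushout complements over $G$ and then rebuild both rewrites on top of it by gluing. Concretely, I would first interpret the \texttt{pullback\_construction} locale on $c_1 \colon D_1 \to G$ and $c_2 \colon D_2 \to G$ (Def.~\ref{def:pullback-construction}), obtaining $D$ with projections $\pi_1 \colon D \to D_1$ and $\pi_2 \colon D \to D_2$; by Lemma~\ref{lem:pb-redchain} this pullback satisfies the reduced chain-condition, which I will need at the end. From Definition~\ref{def:parallel-independence} I extract the parallel-independence morphisms $i_1 \colon L_1 \to D_2$ and $i_2 \colon L_2 \to D_1$; these are injective because $c_2 \circ i_1 = g_1$ and $c_1 \circ i_2 = g_2$ with $g_1,g_2$ injective, using the cancellation lemma \texttt{inj\_comp\_fg\_g\_inj}. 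Feeding $m_1 \colon K_1 \to D_1$ and $i_1 \circ b_1 \colon K_1 \to D_2$ — which agree in $G$ by commutativity of the left pushout of derivation~1 and the independence equation — into the universal property of $D$ yields $j_1 \colon K_1 \to D$ with $\pi_1 \circ j_1 = m_1$ and $\pi_2 \circ j_1 = i_1 \circ b_1$; symmetrically I obtain $j_2 \colon K_2 \to D$.

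Next I would assemble the target graphs by gluing (Lemma~\ref{lemma:gluing}). Gluing $D$ with $R_1$ along $j_1$ and $b_1'$ produces an object $E_1$; gluing $D$ with $R_2$ along $j_2$ and $b_2'$ produces $E_2$; and gluing $R_2$ onto $E_1$ (equivalently $R_1$ onto $E_2$, equivalently both right-hand sides onto $D$) produces the common graph $G'$. Each gluing is a pushout, so this step is mostly bookkeeping of the new morphisms and their commutativities, and it supplies the two candidate matches $h_1 \circ i_2 \colon L_2 \to H_1$ and $h_2 \circ i_1 \colon L_1 \to H_2$, both injective as composites of injectives (recall $h_1,h_2$ are injective by Lemma~\ref{lem:preservation-inj-surj} applied to the gluing pushouts).

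The hard part will be verifying that these glued objects are the \emph{correct} squares: that $E_1$ is the pushout complement for applying $r_2$ to $H_1$ with match $h_1 \circ i_2$, that $E_2$ is dually the complement for applying $r_1$ to $H_2$, and that $G'$ is the resulting pushout object in both cases. I would do this with the composition/decomposition lemmas (Lemma~\ref{lem:comp-decomp}): the right pushout of derivation~1 exhibits $H_1$ as the gluing of $D_1$ and $R_1$, while the gluing defining $E_1$ exhibits $E_1$ as the gluing of $D$ and $R_1$ over the same $K_1$. Passing the pullback face $D \to D_1 \to G$ between pushouts and pullbacks through Lemma~\ref{lem:injective-po-is-pb} and Lemma~\ref{lem:preservation-inj-surj}, and finally invoking the pushout characterization (Theorem~\ref{thm:pushout-characterization}) — whose injectivity, reduced chain-condition and joint-surjectivity hypotheses are discharged from the pullback construction and the reduced chain-condition of Lemma~\ref{lem:pb-redchain} — lets me glue two sub-pushouts into the single pushout $K_2 \to E_1 \to H_1 \leftarrow L_2 \leftarrow K_2$ via \texttt{pushout\_composition} (applied to the appropriately flipped diagrams). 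The genuine risk is diagram management in Isabelle: keeping straight which of the many morphisms plays each of $b,c,f,g$ in every invocation of composition, decomposition and \texttt{flip\_diagram}, and feeding the correct commutativity side-conditions.

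Finally, with both DPO derivations $H_1 \Rightarrow_{r_2} G'$ and $H_2 \Rightarrow_{r_1} G'$ established, I would verify their sequential independence (Def.~\ref{def:sequential-independence}) by exhibiting the two required morphisms for each. For the derivation $G \Rightarrow_{r_1} H_1 \Rightarrow_{r_2} G'$ these are the gluing inclusion $R_1 \to E_1$ (satisfying $(E_1 \to H_1) \circ (R_1 \to E_1) = f_1$) and the morphism $L_2 \to D_1$ taken to be $i_2$ itself (satisfying $h_1 \circ i_2 = h_1 \circ i_2$ trivially); the symmetric pair handles $G \Rightarrow_{r_2} H_2 \Rightarrow_{r_1} G'$. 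Packaging these two \texttt{sequential\_independence} interpretations with the common graph $G'$ discharges the existential statement of Theorem~\ref{thm:church-rosser}.
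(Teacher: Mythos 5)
Your proposal takes essentially the same route as the paper's proof (which follows Ehrig and Kreowski): construct the pullback $D$ of $D_1 \to G \leftarrow D_2$ (Def.~\ref{def:pullback-construction}), lift $K_1$ and $K_2$ into $D$ by the universal property, rebuild $\overline{D}_2$, $\overline{D}_1$ and $G'$ by gluing, certify the decomposed squares via Lemma~\ref{lem:injective-po-is-pb}, pullback decomposition and the pushout characterization (Theorem~\ref{thm:pushout-characterization}), and recombine by pushout composition so that sequential independence falls out of the construction. The only imprecision is in details the paper spells out within this same framework: joint surjectivity of $D \to D_2$ and $L_1 \to D_2$ is discharged from the joint surjectivity of the composed pushout $(1)$ together with injectivity of $D_2 \to G$ (not from the reduced chain-condition alone), and $j_1$ must be checked injective before the gluing locale can be interpreted.
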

Actually, we have shown more, namely that \(G \Rightarrow_{p_1, m_1} H_1 \Rightarrow_{p_2, m'_2} G'\) and \(G \Rightarrow_{p_2, m_2} H_2 \Rightarrow_{p_1, m'_1} G'\) are sequentially independent.
We express this theorem in Isabelle/HOL within the \texttt{parallel\_independence} locale as follows:
\Snippet{churchrosser}
\begin{proof}[Theorem~\ref{thm:church-rosser}]
  We closely follow the original proof by Ehrig and Kreowksi~\cite{EK76} where in a first stage,
  the pushouts (1) - (4) of Fig.~\ref{fig:parallel-independence} are vertically decomposed into pushouts
  \((11)+(11)\), \((21)+(22)\), \((31)+(32)\), and \((41)+(42)\), as depicted in Fig.~\ref{fig:church-rosser-decomposition}.
  In a second stage, these pushouts are rearranged as in Fig.~\ref{fig:church-rosser-rearrange} and the new pushout (5) is constructed.
  Subsequently, we prove the two vertical pushouts \((11)\) and \((12)\).
  The pushouts \((31)\) and \((32)\) follow analogously and are not shown to conserve space.

  We start by constructing the pullback \((12\)) which we bind to the symbol \texttt{c12}, allowing later references, using
  our \texttt{pullback\_construction} locale.
  \Snippet{churchrosser-pb}
  The existence of \(K_1 \to D\) follows from the universal property, and \(D \to D_2\) from the construction of the pullback \((12)\):
  \Snippet{churchrosser-j1}
  From the fact that \((1) = (11)+(12)\), we know (\(11)+(12)\) is a pushout and since \(K_1 \to L_1\) is injective,
  it is also a pullback (cf. Lemma~\ref{lem:injective-po-is-pb}).
  By pullback decomposition (cf. Lemma~\ref{lem:comp-decomp}), \((11)\) is a pullback.
  We use the pushout characterization (cf. Theorem~\ref{thm:pushout-characterization}) to show it is also a pushout, which requires us to show
  injectivity of all morphisms, reduced-chain condition, and joint surjectivity of \(D \to D_2\) and \(L_1 \to D_2\).
  The injectivity of \(K_1 \to L_1\) is given, \(D \to D_2\) follows from pushout \((1)\) and the injectivity of \(K_1 \to L_1\) (cf. Lemma~\ref{lem:preservation-inj-surj}). To show the injectivity of \(L_1 \to D_2\), we use the parallel independence (cf. Def.~\ref{def:parallel-independence}) \( L_1 \to D_2 \to G = L_1 \to G\) and the injectivity of \(L_1 \to G\).To show injectivity of \(K_1 \to D\), we use the triangle \(L_1 \to D_2 \to G = L_1 \to G\) obtained by the universal property of pullback \((12)\) and the injectivity of both, \(L_1 \to G\) and \(D_2 \to G\).
  The reduced-chain condition follows by Lemma~\ref{lem:pb-redchain}.
  To show the joint surjectivity of \(D \to D_2\) and \(L_1 \to D_2\) (that is each \(x\) in \(D_2\) has a preimage in either \(D\) or \(L_1\)).
  Let \(y\) be the image of \(x\) in \(G\). We apply the joint surjectivity of pushout \((11) + (12)\) to \(y\), that is \(y\) has a preimage in either \(D_1\) or \(L_1\). In the former case (\(y\) has a preimage \(z\) in \(D_1\)): from the pullback construction (cf. Def.~\ref{def:pullback-construction}), we get the common preimage of \(z\) and \(x\) in \(D \) which shows the former case.
  In the latter case, \(y\) has a preimage in \(L_1\) via \(D_2\). Since \(D_2 \to G\) is injective, that preimage is mapped via \(x\) which means, \(x\) has a preimage in \(L_1\). This shows the latter case.

  \begin{figure}[ht]
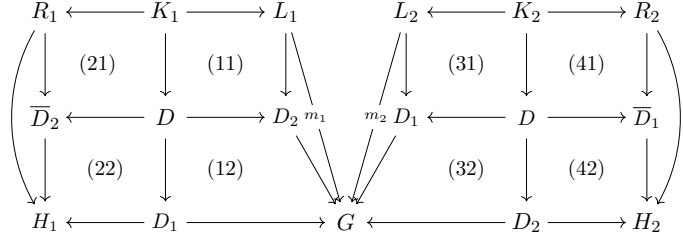

    \center
    \scalebox{0.8}{\tikzfig{church-rosser-decomposition}}
  \caption{Vertical pushout decomposition of Fig.~\ref{fig:parallel-independence}}
  \label{fig:church-rosser-decomposition}
\end{figure}
The pushouts \((21)\), \((41)\) are constructed using the \texttt{gluing} locale (see~\cite{Soeldner-Plump22a} for a detailed description). \Snippet{churchrosser-c21-c41}
The existence of \(\overline{D}_2 \to H_1\) and \(\overline{D}_1 \to H_2\) follows from the universal property of pushout \((21)\) and \((41)\), respectively.
Pushouts \((22)\) and \((42)\) are obtained using the pushout decomposition (see Lemma~\ref{lem:comp-decomp}).
This finishes the first stage of the proof.
\begin{figure}
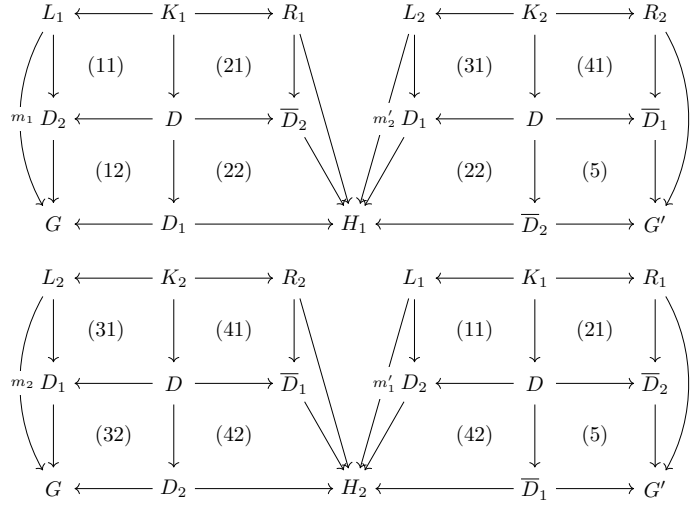

  \center
  \begin{tabular}{c}
    \scalebox{0.8}{\tikzfig{church-rosser-reorg1}} \\
    \\
    \scalebox{0.8}{\tikzfig{church-rosser-reorg2}} \\
  \end{tabular}
  \caption{Rearranged pushouts of Fig.~\ref{fig:church-rosser-decomposition}}
  \label{fig:church-rosser-rearrange}
\end{figure}
The second stage rearranges the pushouts, as depicted in Fig.~\ref{fig:church-rosser-rearrange}, such that we obtain two
direct derivations \(H_1 \Rightarrow_{p_2,m'_2} G'\) and \(H_2 \Rightarrow_{p_1,m'_1} G'\).
Here, we compose the pushouts from stage 1 (see Lemma~\ref{lem:comp-decomp}). Exemplary, the pushouts \((31)\) and \((22)\)
are composed in Isabelle/HOL.
\Snippet{churchrosser-31-22}
The final pushout \((5)\) is constructed and the pushouts are rearranged and vertically composed as depicted in Fig.~\ref{fig:church-rosser-rearrange}.
Isabelle is able to discharge the goal at this point automatically as we instantiated all required locales.
The sequential independence follows from the construction. This finishes the second stage of the proof.
\end{proof}
In developing the proofs, we manually laid out the proof structure by carefully instantiating the locales with the correct parameters in the right order. This process was challenging due to the large number of type parameters involved. However, Isabelle's typechecker provided some support by catching errors when we attempted to instantiate the parameters in the wrong order, helping to guide the proof development process.

Despite this assistance, the overall proof structure was developed manually, with tactics being employed only for specific subgoals. The manual approach allowed us to maintain control over the proof flow and ensured that the reasoning remained clear and understandable. While the type parameters added complexity to the proof, they also provided a valuable safeguard against potential inconsistencies that could arise from incorrectly ordered instantiations which could lead to long debugging.

\section{Related Work}\label{sec:related-work}

Isabelle/HOL was used by Strecker for interactive reasoning about graph transformations \cite{Strecker18a}. A major difference to our work is that he introduces a form of graph transformation that does not fit with any of the established approaches, such as the double-pushout approach. As a consequence, his framework cannot draw on existing theory. Another difference is that \cite{Strecker18a} focuses on verifying first-order properties of some form of graph programs while the current paper is concerned with formalising and proving fundamental results of the DPO theory.
Strecker's formalisation fixes node and edge identifiers as natural numbers, while we keep them abstract. Similar to our development, Isabelle's locale mechanism is employed. 

Our formalisation of graphs follows the work of Noschinski \cite{Noschinski2015}, where records are used to group components and locales to enforce properties such as the well-formedness of graphs or morphisms. The main objective of \cite{Noschinski2015} is to formalise and prove fundamental results of classical graph theory, such as Kuratowski's theorem.

Stark~\cite{stark-afp16} developed an "object-free" definition of categories in Isabelle/HOL, which simplifies the specification of a category and allows functors and natural transformations to be defined as functions satisfying certain axioms. Here, the primary focus is on efficiently representing abstract algebra in higher-order logic. A limitation of this approach is that a special zero element had to be defined. As a result, the formalization prevents the construction of a discrete category over the entire universe, and instead requires a large enough base type. We followed a more traditional approach in formalizing the category theory-related concepts but had to work around the single universe and explicit quantification limitations discussed earlier.

da Costa Cavalheiro et al. \cite{Cavalheiro17a} use the Event-B specification method and its associated theorem prover to reason about double-pushout and single-pushout graph transformations, where rules can have attributes and negative application conditions. Event-B is based on first-order logic and typed set theory. Different from our approach, \cite{Cavalheiro17a} gives only a non-formalised proof for the equivalence between the abstract definition of pushouts and the set-theoretic construction. In contrast, we formalise both the abstract and the operational view and prove their correspondence using Isabelle/HOL. 
As Event-B is based on first-order logic, the properties that can be expressed and verified are quite limited. For it is known that non-local properties of finite graphs cannot be specified in first-order logic \cite{Libkin04a}. This restriction does not apply to our formalisation, as we can make full use of higher-order logic.

\section{Conclusion}\label{sec:discussion}
In this paper, we have significantly advanced our formalisation of basic DPO graph transformation theory using the Isabelle proof assistant, relying on its higher-order logic instantiation. We have thoroughly elaborated on two pivotal results of the DPO theory: the uniqueness of direct derivations and the Church-Rosser theorem. Our discussion includes a series of critical lemmata that build towards the final proof.

We also delve into the technical aspects of our formalisation, including the application of total and partial functions, finite sets, and maps (\texttt{fset} and \texttt{fmap}). Our approach strikes a balance between expressiveness and automation support. While the finite sets (maps) offer advantages in certain proofs, their lack of a comprehensive lemma set needs significant additional implementation and proof work on our part. Moreover, the idea of using a unified identifier for both nodes and edges, though initially appealing, led to increased complexity in our construction, as detailed in Section~\ref{sec:DPO}. This complexity extended to the entire process of gluing and deletion, affecting the properties we proved using this construction.

Looking forward, our aim is to broaden our methodology to include attributed DPO graph transformation, as outlined in~\cite{Hristakiev-Plump16a}. Our long-term goal is to develop a GP\,2 proof assistant, enabling the interactive verification of individual graph programs. This tool may leverage the proof calculus presented in~\cite{Wulandari-Plump21a}, offering a significant advancement in the field of graph program verification.

\section*{Acknowledgment}
  \noindent We are grateful to Brian Courthoute, Annegret Habel, and Thomas T\"urk for discussions on the topics of this paper.

\bibliographystyle{alphaurl}
\bibliography{main}

\end{document}